\newtheorem{theorem}{Theorem}
\newtheorem{claim}[theorem]{Claim}
\newtheorem{cor}[theorem]{Corollary}
\theoremstyle{definition}
\newcommand{\calI}{\mathcal{I}}
\newcommand{\calC}{\mathcal{C}}
\newcommand{\SW}{\text{SW}}
\newcommand{\vv}{\mathbf{v}}
\newcommand{\tol}[1]{\mathbf{t_{#1}}}
\newcommand{\EQ}{\text{EQ}}
\newcommand{\OPT}{\text{OPT}}
\newcommand{\PoA}{\text{PoA}}
\newcommand{\PoS}{\text{PoS}}
\begin{document}

\title{\bf Not all Strangers are the Same: \\ The Impact of Tolerance in Schelling Games}

\author{Panagiotis Kanellopoulos \and Maria Kyropoulou \and Alexandros A. Voudouris}

\date{School of Computer Science and Electronic Engineering \\ University of Essex, UK}

\maketitle

\begin{abstract}
Schelling’s famous model of segregation assumes agents of different types, who would like to be located in neighborhoods having at least a certain fraction of agents of the same type. We consider natural generalizations that allow for the possibility of agents being tolerant towards other agents, even if they are not of the same type. In particular, we consider an ordering of the types, and make the realistic assumption that the agents are in principle more tolerant towards agents of types that are closer to their own according to the ordering. Based on this, we study the strategic games induced when the agents aim to maximize their utility, for a variety of tolerance levels. We provide a collection of results about the existence of equilibria, and their quality in terms of social welfare. 
\end{abstract}

\section{Introduction} \label{sec:intro}
Residential segregation is a broad phenomenon affecting most metropolitan areas, and is known to be caused due to racial or socio-economic differences.  The severity of its implications to society~\citep{BMR01} is the main reason for the vast research attention it has received, with many different models being proposed over the years that aim to conceptualize it (e.g., see \cite{T56}).  The most prominent of those models is that of \citet{schelling1969,schelling1971}, which studies how motives at an individual level can lead to macroscopic behavior and, ultimately, to segregation. In particular, the individuals are modelled as agents of two different types (usually referred to using colors, such as red and blue), and the environment is abstracted by a topology (such as a grid graph), representing a city. The agents occupy nodes of the topology, and prefer neighborhoods in which the presence of their own type exceeds a specified tolerance threshold. If an agent is unhappy with her current location, then she either jumps to a randomly selected empty node of the topology, or swaps positions with another random unhappy agent. Schelling's crucial observation was that such dynamics might lead to largely segregated placements, even when the agents are relatively tolerant of mixed neighborhoods.

A recent series of papers (discussed in Section~\ref{sec:related}) have generalized Schelling's model to include more than two types, and have taken a game-theoretic approach, according to which the agents behave strategically rather than randomly, aiming to maximize their individual {\em utility}. There are many ways to define the utility of an agent $i$ of type $T$. For instance, \citet{elkind2019jump} defined it as the ratio of the number of agents of type $T$ in $i$'s neighborhood over the {\em total} number of agents therein. \citet{echzell2019dynamics} proposed a similar definition, which however does not take into account all the agents of different type in the denominator, but only those of the {\em majority} type. The first definition essentially assumes that the agents view all the agents of different type as {\em enemies}. On the other hand, the second definition assumes that the agents view only the majority type as hostile. An alternative way of thinking about these particular utility functions is as if the agents have binary {\em tolerance} towards other agents in the sense that agents are either friends or enemies; in the case of \citeauthor{elkind2019jump} all the neighbors of an agent are taken into account when computing her utility, whereas in the case of \citeauthor{echzell2019dynamics} some of her neighbors are ignored.

These functions are natural generalizations of the quantity that determines the happiness of agents in Schelling's original model for two types. However, they fail to capture realistic scenarios in which the agents do not have a single-dimensional view of the other agents, but rather have different {\em preferences} over the different types of agents. As an example, consider people looking to buy houses, who may be of four possible types: young single, in a couple, in a family with children, or retired. Such individuals might have preferences over the types, and these preferences are likely to be consistent with the specific given ordering of the types. That is, a young single person would rather live around other single people, or even young couples as they are more likely to establish social relationships, less so with families and even less so with elderly. Similarly, a family with children would be happy to be around other families, but would not mind (young or retired) couples as they are likely to prefer a safe and quiet neighborhood, where their children can grow up and make friends.

\subsection{Our contribution}
To capture scenarios like the example above, we propose a clean model that naturally extends Schelling's original model by incorporating {\em different levels of tolerance} among agent types, and study the induced strategic games in terms of the existence and quality of their equilibria; in Section~\ref{sec:conclusions}, we discuss potential generalizations of our model. 

%We initiate the study of Schelling games with \emph{different levels of tolerance} among agent types by proposing a clean model that captures scenarios like the example above; in Section~\ref{sec:conclusions}, we discuss potential generalizations of our model. We study the induced strategic games in terms of the existence and quality of their equilibria. 
%To capture scenarios like the example above, we propose a model which incorporates {\em different levels of tolerance} among agent types, and study the induced strategic games in terms of the existence and quality of their equilibria. 

To be more specific, our model consists of a set of agents who are partitioned into $\lambda \geq 2$ types, a graph topology, and an ordering of the different agent types which determines the relative tolerance among agents of different types. Naturally, we assume that there is higher tolerance between agents whose types are closer according to the ordering. The exact degree of tolerance between the different types is specified by a {\em tolerance vector}, which consists of weights representing the tolerance between the different types depending on their distance in the given ordering. For example, agents of the same type are in distance $0$ and are fully tolerant towards each other, which is captured by a weight of $1$. The utility of an agent can then be computed as a weighted average of the tolerance that she has towards her neighbors, and every agent aims to occupy a node of the topology to maximize her utility; agents are allowed to unilaterally {\em jump} to empty nodes to increase their utility. 

We study the dynamics of such {\em tolerance Schelling games}. We first focus on questions related to equilibrium existence. For general games, we show that equilibria are not guaranteed to exist if agents are not fully tolerant towards agents in type-distance $1$. We complement this impossibility by showing many positive results for important subclasses of games, in which the topology is a structured graph (such as a grid or a tree) and the tolerance vector satisfies certain properties. We then turn our attention to the quality of equilibria measured by the social welfare objective, defined as the total utility of the agents, and prove (asymptotically tight) bounds on the price of anarchy~\citep{poa} and price of stability~\citep{pos}. 

\subsection{Related work} \label{sec:related}
Residential segregation, and Schelling's original randomized model in particular, has been the basis of a continuous stream of multidisciplinary research in 
Sociology \citep{Clark2008understanding}, 
Economics \citep{Pancs2007spatial,Zhang2004residential}, 
Physics \citep{Vinkovic2006schelling}, 
and Computer Science \citep{Barmpalias2014digital,Bhakta2014clustering,Blasius2021flip,Brandt2012one,Immorlica2017exponential}. 

Most related to our work is a quite recent series of papers in the TCS and AI communities, which deviated from the premise of random behavior, and instead studied the strategic games induced when the agents act as utility-maximizers. \citet{chauhan2018schelling} studied questions related to dynamics convergence in games with two types of agents who can either {\em jump} to empty nodes of a topology (as in our case) or {\em swap} locations with other agents to minimize a {\em cost} function; their model was generalized to multiple types of agents by \citet{echzell2019dynamics}. In this paper we extend the {\em utility} model of \citet{elkind2019jump}, who initially refined the model of \citet{chauhan2018schelling}. They introduced a simpler utility function which the agents aim to maximize, and studied the existence, complexity and quality of equilibria in jump games with multiple types of agents and general topologies. They also proposed many interesting variants, such as {\em enemy aversion} and {\em social Schelling games}, which have been partially studied by \citet{KKV-modified} and \citet{chan2020schelling}, respectively.  \citet{Agarwal2020swap} studied similar questions for swap games, and   
\citet{bilo2020topological} considered a constrained setting, in which the agents can only view a small part of the topology near their current location. Finally, \citet{Bullinger2021pareto} studied the complexity of computing assignments with good welfare guarantees, such as Pareto optimality and variants of it.

%%%%%%%%
%%%%%%%%

\section{Preliminaries}\label{sec:prelims}
A {\em $\lambda$-type tolerance Schelling game} consists of:
%\tc{(or, $\lambda$-TS game, for short) $\calI = (N,G,\tol{\lambda})$} 
\begin{itemize}
\item A set $N$ of  $n \geq 4$ {\em agents}, partitioned into $\lambda \geq 2$ disjoint sets $T_1, \ldots, T_\lambda$ representing {\em types}, such that $\bigcup_{\ell \in [\lambda]} T_\ell = N$.
% and $|T_\ell| = n/\lambda \geq 2$ for every $\ell \in [\lambda]$. 
\item A simple connected undirected graph $G=(V,E)$ called {\em topology}, such that $|V| > n$.
\item A {\em tolerance vector} $\tol{\lambda} = [t_0, \ldots, t_{\lambda-1}]$ consisting of $\lambda$ parameters, such that $t_d$ represents the tolerance that agents of type $T_\ell$ have towards agents of type $T_k$ in Manhattan distance $|\ell-k| = d \in \{0, \ldots, \lambda-1\}$ according to a given ordering $\succ$ of the types (say, $T_1 \succ \ldots \succ T_\lambda$). We assume that agents are more tolerant towards agents of types that are closer to their own according to $\succ$, and we thus have that $1 = t_0  \geq \ldots \geq t_{\lambda-1} \geq 0$. We also assume that $t_{\lambda-1} < 1$; otherwise, all agents are completely tolerant towards all others and the game is trivial. Let $\tau = \sum_{d = 0}^{\lambda-1} t_d$ be the sum of all tolerance parameters.
\end{itemize}
Clearly, the class of $\lambda$-type tolerance Schelling games includes as a special case the classic Schelling games studied in the related literature (e.g., see \citep{elkind2019jump}), for which $t_0 = 1$ and $t_d = 0$ for every $d \in \{1, \ldots, \lambda-1\}$. Because of this particular tolerance vector, we will use the term {\em $\lambda$-type zero-tolerance games} to refer to the classic Schelling games.

In this paper we consider \emph{balanced} games, in which the agents are partitioned in types of equal size, such that $|T_\ell| = n/\lambda \geq 2$ for every $\ell \in [\lambda]$; thus, $n$ is a multiple of $\lambda$. This assumption makes the analysis more interesting as it helps bypass straightforward negative results. We use the abbreviation {\em $\lambda$-TS} to refer to such a balanced $\lambda$-type tolerance Schelling game game $\calI = (N,G,\tol{\lambda})$. For convenience, we will also use the abbreviation {\em $\lambda$-ZTS} to refer to a balanced $\lambda$-type zero-tolerance game $\calI = (N,G)$.

Let $\vv = (v_i)_{i \in N}$ be an {\em assignment} specifying the node $v_i$ of $G$ that each agent $i \in N$ {\em occupies}, such that $v_i \neq v_j$ for $i \neq j$. For every node $v$, we denote by $n_\ell(v|\vv)$ the number of agents of type $T_\ell$ that occupy nodes in the {\em neighborhood} of $v$ according to the assignment $\vv$, and also let $n(v |\vv)=\sum_{\ell \in [\lambda]}n_\ell(v | \vv)$. Given an assignment $\vv$, the utility of agent $i$ of type $T_\ell$ is computed as
\begin{align*}
u_i(\vv) = \frac{1}{n(v_i|\vv)} \sum_{k \in [\lambda]} t_{|\ell-k|} \cdot n_k(v_i|\vv),
\end{align*}
if $n(v_i|\vv)\neq 0$, and $0$ otherwise (in which case we say that the agent is \emph{isolated}).
The agents are {\em strategic} and aim to maximize their utility by {\em jumping} to empty nodes of the topology if they can increase their utility by doing so. We say that an assignment $\vv$ is an {\em equilibrium} if no agent $i$ of any type $T_\ell$ has incentive to jump to any empty node $v$ of the topology, that is, $u_i(\vv) \geq u_i(v,\vv_{-i})$, where $(v,\vv_{-i})$ is the assignment resulting from this jump. Let $\EQ(\calI)$ be the set of equilibrium assignments of a given $\lambda$-TS game $\calI$.

The {\em social welfare} of an assignment $\vv$ is defined as the total utility of the agents, that is,
$$\SW(\vv) = \sum_{i \in N} u_i(\vv).$$
Let $\OPT(\calI) = \max_{\vv} \SW(\vv)$ be the maximum social welfare among all possible assignments in the $\lambda$-TS game $\calI$. For a given subclass $\calC$ of $\lambda$-TS games, the {\em price of anarchy} is defined as the worst-case ratio, over all possible games $\calI \in \calC$ such that $\EQ(\calI) \neq \varnothing$, between $\OPT(\calI)$ and the {\em minimum} social welfare among all equilibria:
$$\PoA(\calC) = \sup_{\calI \in \calC: \EQ(\calI) \neq \varnothing} \frac{\OPT(\calI)}{\min_{\vv \in \EQ(\calI)}\SW(\vv)}.$$
Similarly, the {\em price of stability} takes into account the ratio between $\OPT(\calI)$ and the {\em maximum} social welfare among all equilibria:
$$\PoS(\calC) = \sup_{\calI \in \calC: \EQ(\calI) \neq \varnothing} \frac{\OPT(\calI)}{\max_{\vv \in \EQ(\calI)}\SW(\vv)}.$$

%%%%%%%%
%%%%%%%%

\section{Equilibrium existence}\label{sec:existence}
In this section, we show several positive and negative results about the existence of equilibrium assignments, for interesting subclasses of tolerance Schelling games. We start with the relation of equilibrium assignments in $\lambda$-ZTS games and general $\lambda$-TS games. 

\begin{theorem} \label{thm:2-type-equivalence}
Consider a $\lambda$-ZTS game $\calI=(N,G)$ and a $\lambda$-TS game $\calI'=(N,G,\tol{\lambda})$.
For $\lambda=2$, $\EQ(\calI') \subseteq \EQ(\calI)$ and $\EQ(\calI) \setminus \EQ(\calI')$ consists of assignments with isolated agents. 
For $\lambda \geq 3$, $\EQ(\calI)$ and $\EQ(\calI')$ are incomparable.
\end{theorem}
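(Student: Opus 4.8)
\section*{Proof proposal}

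The plan is to treat the two regimes separately, with the whole $\lambda=2$ case resting on a single affine identity. For an agent $i$ of type $T_\ell$ who is non-isolated under $\vv$, let $p = n_\ell(v_i|\vv)/n(v_i|\vv)$ denote her zero-tolerance utility (her same-type fraction). Since $\lambda=2$ means every other neighbor is the unique opposite type at distance $1$, a direct computation gives $u_i(\vv) = p + t_1(1-p) = (1-t_1)\,p + t_1$, and the same identity holds verbatim for both types. Because $t_1 < 1$, the map $x \mapsto (1-t_1)x + t_1$ is strictly increasing on $[0,1]$; hence, \emph{on non-isolated positions}, a jump strictly raises the zero-tolerance utility if and only if it strictly raises the tolerance utility. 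This single observation drives everything.

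To prove $\EQ(\calI') \subseteq \EQ(\calI)$ I would argue by contraposition: take $\vv \in \EQ(\calI')$ and suppose some agent has a profitable zero-tolerance jump from $v_i$ to an empty node $v$. A short case analysis on whether $v_i$ and $v$ are isolated for that agent closes every possibility. If both are non-isolated, the affine relation converts the profitable zero-tolerance jump into a profitable tolerance jump; jumping \emph{into} isolation is never zero-tolerance-profitable since the utility drops to $0$; and jumping \emph{out of} isolation to a node with at least one same-type neighbor is tolerance-profitable as well (the tolerance utility rises from $0$ to a positive value). Each case contradicts $\vv \in \EQ(\calI')$. For the second $\lambda=2$ claim I would again use the contrapositive and show that any zero-tolerance equilibrium with no isolated agents lies in $\EQ(\calI')$: with no isolation present, the only candidate deviations are to non-isolated nodes (governed by the affine relation, hence not tolerance-profitable) or to isolated nodes (a strict loss, since a non-isolated agent already earns at least $t_1 \ge 0$). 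Thus no tolerance-profitable jump exists, and every element of $\EQ(\calI)\setminus\EQ(\calI')$ must contain an isolated agent.

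For $\lambda \ge 3$ the affine identity fails, because the tolerance utility now depends on the full distance profile of a neighborhood rather than only on the same-type fraction, and I would exploit exactly this gap. Fixing a tolerance vector with $t_1 > t_2$ (available once $\lambda \ge 3$, e.g. $\tol{\lambda} = [1, \tfrac12, 0, \ldots, 0]$), I build a gadget in which a single agent of type $T_1$ chooses between a node $a$ having one $T_1$-neighbor and $m$ neighbors of type $T_3$, and a node $b$ having only $T_2$-neighbors. Her zero-tolerance utility ranks $a$ (value $1/(1+m) > 0$) strictly above $b$ (value $0$), whereas for $m > (1-t_1)/(t_1-t_2)$ her tolerance utility at $a$, namely $(1 + t_2 m)/(1+m)$, drops below $t_1$, her value at $b$, reversing the ranking. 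Placing her at $b$ with $a$ empty yields an assignment that is tolerance-stable for her but zero-tolerance-unstable; placing her at $a$ with $b$ empty yields the mirrored, isolation-free witness in the opposite direction. Embedding the gadget into a connected balanced topology and filling the remaining nodes with tight clusters then produces, in each direction, a genuine equilibrium of one game that is not an equilibrium of the other, establishing incomparability.

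The main obstacle is the $\lambda \ge 3$ construction: the reversed-preference gadget only pins down one agent's incentives, so the real work is completing the assignment on a connected, balanced topology so that it is a true equilibrium of the intended game—no agent anywhere having a profitable jump—while deliberately failing in the other game. I expect to handle this by arranging the remaining agents into monochromatic or type-adjacent clusters and inserting just enough empty nodes that $a$ and $b$ are the gadget agent's only relevant options, which keeps the global equilibrium verification routine. The $\lambda=2$ part, by contrast, reduces entirely to the affine identity together with careful bookkeeping of the isolation boundary case.
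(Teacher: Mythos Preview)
Your proposal is correct and follows essentially the same route as the paper. For $\lambda=2$, your affine identity $u_i = (1-t_1)p + t_1$ is precisely the monotone transformation behind the paper's algebraic manipulation, and your isolation case analysis matches theirs; for $\lambda \ge 3$, both you and the paper exhibit explicit gadgets in which the two games rank a pair of positions oppositely---your parametrised $(a,b)$-gadget differs in detail from the paper's two small fixed examples, and you rightly flag that embedding the gadget into a full balanced equilibrium is the residual work, which the paper itself also leaves at the level of ``$\vv$ can be an equilibrium of $\calI$, but not of $\calI'$''.
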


\begin{proof}
We start with $\lambda=2$; for convenience, we will refer to the two types as red and blue. 
Let $\vv$ be an equilibrium of $\calI'$. Clearly, for $\calI$ and $\calI'$ to be different, it must be the case that $t_1 > 0$. Consequently, there are no isolated agents in $\vv$ as they would have incentive to deviate to nodes that are adjacent to agents of the other type and increase their utility from $0$ to $t_1$. We will show that $\vv$ is an equilibrium of $\calI$ as well. Without loss of generality, consider a red agent $i$ who occupies a node $v_i$ that is adjacent to $n_r(v_i)$ red and $n_b(v_i)$ blue agents. Since agent $i$ is not isolated, it holds that $n_r(v_i) + n_b(v_i) \geq 1$. If $n_b(v_i) = 0$, then agent $i$ has maximum utility $1$ in both $\calI$ and $\calI'$. 
Hence, we can assume that $n_b(v_i) \geq 1$. 
Since $\vv$ is an equilibrium of $\calI'$, agent $i$ has no incentive to unilaterally jump to any empty node $v$ of the topology. That is,
\begin{align*}
&\frac{n_r(v_i) + t_1 \cdot n_b(v_i) }{n_r(v_i) + n_b(v_i)} \geq \frac{n_r(v) + t_1 \cdot n_b(v)}{n_r(v) + n_b(v)} \\
&\Leftrightarrow  (1-t_1) \left( \frac{n_r(v_i)}{n_b(v_i)} - \frac{n_r(v)}{n_b(v)} \right) \geq 0,
\end{align*}
where $n_r(v)$ and $n_b(v)$ are the number of red and blue agents that are adjacent to $v$ {\em after} agent $i$ jumps to $v$; observe that $n_b(v) \geq 1$, as otherwise agent $i$ would obtain maximum utility of $1$ by jumping to $v$, contradicting that $\vv$ is an equilibrium of $\calI'$. Since $t_1 < 1$, we equivalently have that
$$\frac{n_r(v_i)}{n_b(v_i)} \geq \frac{n_r(v)}{n_b(v)} 
\Leftrightarrow \frac{ n_r(v_i) }{ n_r(v_i) + n_b(v_i) } \geq \frac{ n_r(v) }{ n_r(v) + n_b(v)}.$$
Therefore, agent $i$ has no incentive to jump to the empty node $v$ in $\calI$, and $\vv$ is an equilibrium of $\calI$ as well. Using similar arguments, we can show that any equilibrium of $\calI$ such that there is no isolated agent is also an equilibrium of $\calI'$. 

For $\lambda \geq 3$, to show that $\EQ(\calI)$ is incomparable to $\EQ(\calI')$, consider the tolerance vector $\tol{3} = (1, 1/2, 0)$ and the following two partial assignments $\vv$ and $\vv'$:
\begin{itemize}
\item In $\vv$, an agent $i$ of type $T_1$ occupies a node $v_i$ which is adjacent to two nodes, one occupied by an agent of type $T_1$ and one occupied by an agent of type $T_3$. There is also an empty node $v$ which is adjacent to two nodes, one occupied by an agent of type $T_1$ and one occupied by an agent of type $T_2$. In $\calI$, agent $i$ has no incentive to jump from $v_i$ to $v$ as both nodes give her utility $1/2$. On the other hand, in $\calI'$, agent $i$ has utility $(1+t_2)/2 = 1/2$ and has incentive to jump to $v$ to increase her utility to $(1+t_1)/2 = 3/4$. Hence, $\vv$ can be an equilibrium of $\calI$, but not of $\calI'$. 

\item In $\vv'$, an agent $i$ of type $T_1$ occupies a node $v_i$ which is adjacent to three nodes, one occupied by an agent of type $T_1$, one occupied by an agent of type $T_2$ and one occupied by an agent of type $T_3$. There is also an empty node $v$ which is adjacent to two nodes, one occupied by an agent of type $T_1$ and one occupied by an agent of type $T_3$. In $\calI$, agent $i$ has incentive to jump from $v_i$ to $v$ in order to increase her utility from $1/3$ to $1/2$. However, in $\calI'$, agent $i$ has no incentive to jump as she has utility $(1+t_1)/3 = 1/2$ by occupying node $v_i$, which is exactly the utility she would also obtain by jumping to $v$. Consequently, $\vv'$ can be an equilibrium of $\calI'$, but not of $\calI$. 
\end{itemize}
This completes the proof.
\end{proof}

Since there exist simple $2$-ZTS games that do not admit any equilibria~\citep{elkind2019jump}, the first part of Theorem~\ref{thm:2-type-equivalence} implies that equilibria are not guaranteed to exist for general $2$-TS games as well. In fact, by carefully inspecting the proof of \citet{elkind2019jump} that $\lambda$-ZTS games played on trees do not always admit equilibria for every $\lambda \geq 2$, we can show the following stronger impossibility result.

\begin{theorem} \label{thm:inexistence}
For every $\lambda \geq 2$ and every tolerance vector $\tol{\lambda}$ such that $t_1 <1$, there exists a $\lambda$-TS game $\calI = (N,G,\tol{\lambda})$ in which $G$ is a tree and does not admit any equilibrium.
\end{theorem}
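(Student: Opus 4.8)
The plan is to split on $\lambda$, reducing both cases to the existence of a balanced $2$-type zero-tolerance game on a tree without an equilibrium, which is provided by \citet{elkind2019jump}.

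\emph{The case $\lambda=2$.} Here I would simply invoke \Cref{thm:2-type-equivalence}. Let $G_0$ be a tree carrying a balanced $2$-ZTS game $\calI=(N,G_0)$ with $\EQ(\calI)=\varnothing$, and let $\calI'=(N,G_0,\tol{2})$ be the $2$-TS game obtained by fixing any tolerance vector with $t_1<1$. The first part of \Cref{thm:2-type-equivalence} gives $\EQ(\calI')\subseteq\EQ(\calI)=\varnothing$, so $\calI'$ admits no equilibrium. This settles $\lambda=2$ for every admissible $\tol{2}$.

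\emph{The case $\lambda\ge 3$.} Now the two equilibrium sets are only incomparable, so non-existence cannot be transferred for free, and I would instead embed the hard $2$-type instance into a balanced $\lambda$-type one whose additional types are inert. Take $G_0$ above (with agents of types $T_1,T_2$ and its $p\ge 1$ empty nodes) as a \emph{core} gadget $C$. For balance, attach to $C$ one monochromatic \emph{satellite} per type: a star whose centre and $m-1$ leaves all share one type and which contains no internal empty node, sized so that every type owns exactly $m=n/\lambda$ agents (the surplus $T_1,T_2$ agents get their own satellites as well). The decisive facts are that for $\ell\ge 3$ we have $|\ell-1|\ge 2$ and $|\ell-2|\ge 1$, so a $T_\ell$ agent is tolerant to \emph{every} core agent only up to $t_1<1$, while symmetrically a core agent draws tolerance at most $t_1$ from a satellite neighbour, exactly as from the other core type. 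Hence no agent ever strictly prefers a neighbourhood of foreign agents to one of its own type, and by taking $m$ large each satellite centre has utility $(m-1+t_{\ell-1})/m>t_1$, so no satellite agent wants to enter $C$.

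The crux — and the step I expect to be the main obstacle — is a \emph{separation lemma}: in every candidate equilibrium $\vv$ of the constructed game, each satellite is intact and all $p$ empty nodes lie inside $C$. Granting this, the restriction of $\vv$ to $C$ is an assignment of the embedded balanced $2$-TS game on $G_0$, which by the $\lambda=2$ case is not an equilibrium; thus some core agent has a profitable jump to an empty core node, contradicting equilibrium. To prove the lemma I would show that any displaced agent or misplaced hole is exploitable: a $T_\ell$ agent sitting outside its satellite has utility at most $t_1$ yet could reach a monochromatic neighbourhood of utility near $1$, and an empty node adjacent to a satellite lets a displaced same-type agent jump in to utility $1$. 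The genuinely delicate parts are the bookkeeping of the empty nodes (since equilibrium quantifies over all assignments, the $p$ holes may a priori sit anywhere, so the bridge attachments and the hole count must be chosen so that every configuration with a hole outside $C$ admits such a move) and attaching the satellites so that no core agent whose deviations drive the non-existence of the embedded game acquires a satellite neighbour. An alternative that matches the hint in the text is to keep the specific $\lambda$-ZTS tree of \citet{elkind2019jump} and re-run its non-existence argument, checking that each decisive improving jump there trades a same-type neighbour (weight $1$) against a single fixed other type; for such ``effectively two-type'' jumps the monotonicity behind \Cref{thm:2-type-equivalence} keeps the jump profitable for every $t_1<1$, and the only work is to confirm (or lightly modify the construction to ensure) that no decisive jump relies on comparing two distinct foreign types, whose relative tolerances could otherwise reverse the comparison.
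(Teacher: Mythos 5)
Your $\lambda=2$ case is correct and is exactly the observation the paper itself makes right after \Cref{thm:2-type-equivalence} (non-existence for $2$-ZTS trees plus $\EQ(\calI')\subseteq\EQ(\calI)$). The problem is $\lambda\ge 3$: your ``separation lemma'' is not a technical detail to be granted --- it is essentially the whole theorem, and the way you sketch it would not go through. Concretely, your construction has exactly one empty node (the satellites are full and the core $G_0$ contributes its single hole), and equilibrium non-existence must be argued for \emph{every} assignment. Consider an assignment in which, say, a $T_4$ agent occupies the centre of the $T_3$ satellite, a displaced $T_3$ agent sits inside the core, and the hole is a core node: the displaced $T_3$ agent has no monochromatic neighbourhood available to jump to (the unique empty node is surrounded by $T_1$/$T_2$ agents, giving her at most $t_1$ again), so ``any displaced agent is exploitable'' fails; and the core now contains a foreign-type intruder, so its restriction is not an assignment of the embedded $2$-type game and the $\lambda=2$ non-existence argument says nothing about it. Even in fully ``separated'' configurations there is a further defect you only half-acknowledge: the core nodes carrying the bridge edges to satellites have extra neighbours, so utilities and deviation incentives inside the core are not those of the game on $G_0$, and the inherited property ``no assignment of $G_0$ is an equilibrium'' does not transfer verbatim. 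Ruling out all such configurations requires a case analysis comparable in length to a direct proof, which your proposal does not supply.

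The paper takes precisely the route you relegate to a closing alternative: it keeps the tree of \citet{elkind2019jump} (root $\alpha$, child $\beta$, $2\lambda-1$ middle nodes $\Gamma$, $\lambda$ leaves under each) with $2\lambda+1$ agents per type, and re-runs the four-case analysis on the location of the unique empty node for an arbitrary tolerance vector with $t_1<1$. But your assessment of ``the only work'' there is too optimistic: two of the four cases are genuinely multi-type and do \emph{not} reduce to same-type-versus-one-foreign-type monotonicity. When $\beta$ is empty, the equilibrium hypothesis forces every node of $\Gamma$ to host a type with \emph{zero} tolerance towards the isolated agent's type (impossible if all parameters are positive), which in turn forces all agents of that type into the leaves with utility $0$; when a node of $\Gamma$ is empty, the argument splits on $t_1=0$ versus $t_1>0$. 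These steps exploit where the tolerance vector vanishes and compare several foreign types simultaneously --- they happen to work for every vector with $t_1<1$, but verifying this is the actual content of the proof, not a routine check.
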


\begin{proof}
We will show how the proof of \citet{elkind2019jump} that $\lambda$-ZTS games do not always admit equilibria can be directly extended to capture the class of $\lambda$-TS games with appropriate tolerance vectors.
Consider a $\lambda$-TS game $\calI = (N,G,\tol{\lambda})$ where $N$ is a set of $n=\lambda(2\lambda+1)$ agents partitioned into $\lambda \geq 2$ types such that there are $(2\lambda+1)$ agents per type, and $\tol{\lambda}$ is a tolerance vector with $t_1 < 1$. The topology $G$ is a tree consisting of $n+1$ nodes, distributed to four levels. There is a root node $\alpha$ with one child node $\beta$, which in turn has a set $\Gamma$ of $2\lambda-1$ children. Every node $\gamma \in \Gamma$ has a set $\Delta_\gamma$ of $\lambda$ children nodes, which are leaves of the tree; let $\Delta = \bigcup_{\gamma \in \Gamma} \Delta_\gamma$. Adapting the arguments of \citet{elkind2019jump}, we will now show that this game does not admit any equilibrium. Assume otherwise, that there exists an equilibrium assignment $\vv$. We distinguish between cases by enumerating the single node $v$ that is left empty:
\begin{itemize}
\item $v = \alpha$. Suppose that the agent $i$ who occupies node $\beta$ is of type $T$.
Since $|\Gamma| = 2\lambda-1$ and $|T\setminus \{i\}| = 2\lambda$, there must be an agent $j \neq i$ of type $T$ with a neighbor of type $T' \neq T$. Hence, agent $j$ has utility strictly less than $1$, and has incentive to deviate to $\alpha$ to obtain maximum utility $1$.

\item $v = \beta$. Suppose that the agent $i$ who occupies node $\alpha$ is of type $T_\ell$. Since she is isolated, her utility is $0$. Since $\vv$ is assumed to be an equilibrium, agent $i$ would still have utility $0$ by deviating to $\beta$; note that this is impossible if all tolerance parameters are positive. As a result, all the nodes in $\Gamma$ must be occupied by agents of types $T_k$ such that $t_{|\ell-k|} = 0$. But then, this means that all agents of type $T_\ell$, who must occupy nodes of $\Delta$, also have utility $0$ and incentive to deviate to $\beta$, connect to agent $i$, and obtain positive utility.

\item $v \in \Gamma$. Since $v$ is empty, all the agents occupying nodes of $\Delta_v$ have utility $0$. Observe that if there exist two agents of the same type or $t_1 > 0$, then some agent occupying a node of $\Delta_v$ would deviate to $v$ to increase her utility from $0$ to positive. Hence, all such agents must be of different type and $t_1 = 0$. But then, one of these agents has again incentive to deviate to $v$ to connect to the agent that occupies node $\beta$.

\item $v \in \Delta$. Suppose that the agent $i$ who occupies the parent of $v$ is of type $T_\ell$. For such an assignment to be an equilibrium, it must be the case that all other agents of type $T_\ell$ have utility $1$, as otherwise they would have incentive to deviate to $v$. Since $t_1 < 1$, this would mean that all other agents of type $T_\ell$ have only neighbors of type $T_\ell$; it is easy to see that this is impossible. 
\end{itemize}
This completes the proof.
\end{proof}

Since Theorem~\ref{thm:inexistence} implies that it is impossible to hope for general positive existence results, in the remainder of this section we focus on games with structured topologies and tolerance vectors. In particular, we consider the class of $\alpha$-binary $\lambda$-TS games with $\alpha \in \{1, \ldots, \lambda\}$ in which the tolerance vector $\tol{\lambda}$ is such that 
\begin{align*}
t_d =
\begin{cases}
1, & \text{if } d < \alpha \\
0, & \text{otherwise.}
\end{cases}
\end{align*}
Clearly, the class of $1$-binary $\lambda$-TS games coincides with that of $\lambda$-ZTS.

We next show that when the topology is a grid{\footnote{We focus on $4$-grids where internal nodes have $4$ neighbors.} or a tree, there exist values of $\alpha \in \{1, \ldots, \lambda\}$ for which $\alpha$-binary $\lambda$-TS games played on such a topology always admit at least one equilibrium. Our first result for grids is the following. 

\begin{theorem}\label{thm:grids-2types}
Every $2$-ZTS game $\calI = (N,G)$ in which $G$ is a grid admits at least one equilibrium.
\end{theorem}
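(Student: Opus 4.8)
The plan is to prove existence by exhibiting an explicit equilibrium, exploiting the fact that in a $2$-ZTS game the utility of an agent is simply the fraction of its occupied neighbors that share its color. Hence the maximum possible utility is $1$, attained exactly when all of an agent's occupied neighbors are of its own type. The natural target is therefore a \emph{fully segregated} assignment: place the $n/2$ red agents on a contiguous region $R$ and the $n/2$ blue agents on a contiguous region $B$ so that $R$ and $B$ are non-adjacent in $G$ and neither contains an isolated occupied node. In such an assignment every agent has utility $1$, which is the global maximum, so no jump to \emph{any} empty node (recall that jumps may target any empty node, not just adjacent ones) can strictly increase anyone's utility, and the assignment is immediately an equilibrium. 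Concretely, writing the grid as $r \times c$, I would reserve a separating buffer of empty nodes (a full column, or a diagonal staircase), fill one side column by column with the red agents and the other side with the blue agents; since each type has at least two agents packed contiguously inside columns of height at least two, no agent is isolated, and the buffer guarantees that $R$ and $B$ are non-adjacent. This settles all \emph{spacious} instances, i.e. those where $|V|-n$ is large enough to host such a buffer.

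The hard part will be the complementary \emph{tight} case, in which $|V|-n$ is small. Here full segregation is impossible: if $R$ and $B$ are equal in size and mutually non-adjacent, the empty nodes must form a \emph{balanced} vertex separator of the grid, and any balanced separator of a grid is large (growing with the smaller side length $\min(r,c)$); so when $|V|-n$ is below this threshold, every equal red/blue partition of the occupied nodes leaves a bichromatic edge and some agent with utility below $1$. Because an agent may relocate to any empty node, the equilibrium condition is global: the assignment is stable if and only if no empty node $v$ would offer a relocating agent a same-type fraction exceeding that agent's current utility. My plan for this regime is to construct a \emph{near-segregated} assignment in which the two colors meet along a single interface and the few empty nodes sit on that interface, and to certify stability through two invariants maintained simultaneously — every agent keeps at least as many same-type as other-type neighbors (utility at least $1/2$), and every empty node is \emph{balanced}, with no color in strict majority among its occupied neighbors (so relocating there yields same-type fraction at most $1/2$). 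Together these invariants imply the global no-improvement condition, since the minimum agent utility then dominates the maximum fraction available at any empty node.

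I expect the main obstacle to be reconciling these two invariants with the \emph{boundary of the grid}. In the interior the two requirements are easy to meet, but cells where the interface meets the outer boundary have fewer neighbors and tend to produce either an unhappy agent (utility below $1/2$) or an unbalanced empty node (same-type fraction above $1/2$), each of which triggers a profitable global jump; moreover, a straight interface makes the interface cells same-type heavy, whereas a diagonal interface balances the interior but worsens the boundary cells, so even the \emph{shape} of the interface is part of the difficulty. The crux is thus to choose the orientation and endpoints of the interface, together with the exact positions of the empty nodes, so that both invariants survive all the way to the boundary; I would organize this as a short case analysis on the grid dimensions and on the value of $|V|-n$, verifying the two invariants directly in each case.
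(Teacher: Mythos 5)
Your spacious case is sound and coincides with the easy half of the paper's proof: the paper assigns agents column by column (all reds, then the $e = mM-n$ empty nodes, then all blues), and when $e \geq m$ this is exactly your buffered, fully segregated assignment in which every agent has utility $1$. The genuine gap is the tight case $e < m$: there you never produce an assignment, only a plan. You state two invariants (every agent has same-type fraction at least $1/2$; at every empty node no color is in strict majority among its occupied neighbors), you correctly observe that together they imply equilibrium, and you correctly identify the grid boundary as the obstruction --- but then you defer the entire construction to an unexecuted ``short case analysis on the grid dimensions and on the value of $|V|-n$.'' That case analysis \emph{is} the theorem in this regime; without it nothing is proved, and it is not evident that an assignment satisfying both invariants simultaneously even exists for every $m \times M$ grid and every $1 \leq e < m$.

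It is worth noting that the paper's own equilibrium violates your invariant on empty nodes, which suggests your invariant pair may be more rigid than necessary. In the paper's column-wise assignment with $e < m$, every empty node $v$ has a red left neighbor and a blue right neighbor, but its top neighbor can be red while its bottom neighbor is empty; then red is in strict majority at $v$ and a red agent jumping there would obtain $2/3 > 1/2$. The paper does not rebalance such nodes; instead it argues that every red agent either already has utility at least $2/3$, or would get at most $1/2$ by the jump, and that the single dangerous configuration (a red agent with utility exactly $1/2$, which forces $M=3$ and a column containing two empty nodes, one red and one blue) contradicts $m \leq M$. So the paper's route is: keep the crude column-wise placement, tolerate empty nodes worth $2/3$, and kill the remaining jumps by a direct argument. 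To complete your route you would instead have to exhibit, for all grid dimensions and all $e < m$, an interface shape and empty-node placement meeting both invariants all the way to the boundary --- precisely the step you label as the crux and leave open.
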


\begin{proof}
Consider an arbitrary $2$-ZTS game $\calI = (N,G)$ in which the topology $G$ is an $m \times M$ grid ($m$ rows and $M$ columns), such that $m \leq M$. Also, let $x \geq 2$ be the number of agents per type, such that $n = 2x$. 
Clearly, since $n\geq 4$ it holds $M\geq 3$. To simplify our discussion, we will refer to the two types as red and blue.

We construct an equilibrium by assigning the agents to the nodes of the grid column-wise from top to bottom and left to right as follows: We first assign all the red agents. We then leave a number $e =  m\cdot M - n \geq 1$ of nodes empty, and finish by assigning all the blue agents. Let $\vv$ be the assignment computed.

Clearly, if $e \geq m$, $\vv$ is guaranteed to be an equilibrium since all agents have maximum utility $1$. Otherwise, when $e<m$ and since $M\geq 3$, it cannot be that there are $x \leq m$ agents per type. So, in the following we focus on the case where $e < m$ and $x> m$.

Let $v$ be an empty node, and consider any red agent $i$ that occupies node $v_i$ according to $\vv$. We will argue that $i$ has no incentive to jump to $v$; the argument for blue agents is symmetric. Observe that agent $i$ has utility at least $1/2$, since she is connected to at least one red agent and at most one blue agent by construction. We will focus on the case where agent $i$ has exactly one blue neighbor, as otherwise she clearly has no incentive to jump to $v$. By the construction of $\vv$ and the fact that $e <m$, we also have that the left neighbor of $v$ is occupied by a red agent, and the right neighbor of $v$ is occupied by a blue agent.
Moreover, the top neighbor of $v$, call it $\alpha$, can either be empty or occupied by a red agent, and  the bottom neighbor of $v$, call it $\beta$, can either be empty or occupied by a blue agent. We distinguish between the following cases:
\begin{itemize}
\item Case I: $\alpha$, $\beta$ are both empty, or $\alpha$ is empty and $\beta$ is occupied by a blue agent, or $\alpha$ is occupied by a red agent and $\beta$ is occupied by a blue agent. Agent $i$ has no incentive to deviate to $v$ since by doing so she would only be able to get utility at most $1/2$.

\item Case II: $\alpha$ is occupied by a red agent and $\beta$ is empty. If agent $i$ occupies $\alpha$ she has no incentive to deviate to $v$, as she would be able to get utility exactly $1/2$, while right now she has utility at least $1/2$. As a result, we have that agent $i$ is not adjacent to any empty node, and by deviating to $v$ agent $i$ would obtain utility $2/3$. The only case in which agent $i$ would have incentive to deviate to $v$ is if her utility is exactly $1/2$, which would mean that she has exactly one red neighbor and one blue neighbor. This is possible only if $v_i$ is the last node of the first column of the grid and $M=3$. However, this contradicts the fact that $m \leq M$, since  there is a column with two empty nodes ($v$ and $\beta$), one node ($\alpha$) occupied by a red agent, and one node (the right neighbor of $v_i$) occupied by a blue agent.
\end{itemize}
This completes the proof.
\end{proof}

The proof of Theorem~\ref{thm:grids-2types} is constructive and such that in the computed equilibrium no agent is isolated. Consequently, in combination with Theorem~\ref{thm:2-type-equivalence}, it further implies the following:
\begin{cor}
Every $2$-TS game $\calI = (N,G,\tol{\lambda})$ in which $G$ is a grid admits at least one equilibrium.
\end{cor}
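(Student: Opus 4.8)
The plan is to reduce the claim to the zero-tolerance case already handled by Theorem~\ref{thm:grids-2types}, and then transfer the resulting equilibrium using the structural relationship established in Theorem~\ref{thm:2-type-equivalence}. Concretely, given a $2$-TS game $\calI' = (N, G, \tol{2})$ in which $G$ is a grid, I would consider its companion $2$-ZTS game $\calI = (N, G)$, obtained by keeping the same agents and the same topology but replacing the tolerance vector with the zero-tolerance one. The two games range over exactly the same set of assignments, so the goal reduces to producing a single assignment that is simultaneously an equilibrium of $\calI$ and of $\calI'$.

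First I would invoke Theorem~\ref{thm:grids-2types} to obtain an equilibrium $\vv$ of $\calI$. The crucial additional property I need is that $\vv$ contains no isolated agent. This is guaranteed by the construction in the proof of Theorem~\ref{thm:grids-2types}, which assigns the red agents, then a block of empty nodes, then the blue agents, column-wise: in the trivial case ($e \geq m$) every agent already has utility $1$, while in the non-trivial case each agent is adjacent to at least one agent of its own type and hence has utility at least $1/2 > 0$. I would note explicitly that this no-isolated-agents guarantee holds in every branch of that construction. Then I would apply the $\lambda = 2$ part of Theorem~\ref{thm:2-type-equivalence}: since $\EQ(\calI) \setminus \EQ(\calI')$ consists only of assignments with isolated agents, any equilibrium of $\calI$ with no isolated agent is also an equilibrium of $\calI'$. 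Applying this to $\vv$ yields $\vv \in \EQ(\calI')$, so $\calI'$ admits an equilibrium, as desired.

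The argument is essentially a bookkeeping combination of the two previous results, so I do not expect a genuine mathematical obstacle. The one point requiring care --- and the only place where the proof could slip --- is verifying that the specific equilibrium produced by Theorem~\ref{thm:grids-2types} indeed has no isolated agents, since the transfer provided by Theorem~\ref{thm:2-type-equivalence} fails precisely on assignments containing isolated agents. Confirming this property is exactly what licenses moving from the zero-tolerance equilibrium to the tolerance equilibrium, and it is already secured by the constructive nature of the proof of Theorem~\ref{thm:grids-2types}.
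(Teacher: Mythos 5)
Your proposal is correct and matches the paper's own argument exactly: the paper likewise derives the corollary by noting that the equilibrium constructed in the proof of Theorem~\ref{thm:grids-2types} has no isolated agents (every agent has utility at least $1/2$, or $1$ in the trivial case), and then transfers it to the tolerance game via the $\lambda=2$ part of Theorem~\ref{thm:2-type-equivalence}. Your emphasis on the no-isolated-agents condition being the one point of care is precisely the observation the paper relies on.
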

Unfortunately, showing a result similar to  Theorem~\ref{thm:grids-2types} for every $\lambda \geq 3$ is a very challenging task. Instead, we  show the following result for $2$-binary games.

\begin{theorem} \label{thm:2-binary-grid}
Every $2$-binary $\lambda$-TS game $\calI = (N,G,\tol{\lambda})$ in which $G$ is a grid admits at least one equilibrium.
\end{theorem}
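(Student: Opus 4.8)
The plan is to construct, for every such game, an assignment in which every agent attains the maximum possible utility $1$, which is trivially an equilibrium. In a $2$-binary game we have $t_0 = t_1 = 1$ and $t_d = 0$ for $d \ge 2$, so the utility of an agent of type $T_\ell$ equals the fraction of her neighbors whose type lies in $\{T_{\ell-1}, T_\ell, T_{\ell+1}\}$. Call two adjacent occupied nodes \emph{friendly} if the types occupying them are at type-distance at most $1$, and \emph{hostile} otherwise. An agent all of whose neighbors are friendly and who is not isolated has utility $1$ and hence no profitable jump; so it suffices to produce an assignment that is \emph{hostile-free} (no two adjacent occupied nodes are hostile) and has no isolated agent. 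I would therefore reduce the statement to a purely combinatorial layout problem: partition the occupied nodes into the $\lambda$ types of size $x = n/\lambda$ each so that grid-adjacent occupied nodes always carry equal or consecutive types.

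First I would dispose of the ``thin'' grids. Assume $m \le M$ (rows $\le$ columns). If $m \le x$, fill the grid in column-major order, placing the $x$ agents of $T_1$ first, then those of $T_2$, and so on up to $T_\lambda$, leaving the $e = mM - n \ge 1$ trailing nodes empty. Vertical neighbors are consecutive in this order, and horizontal neighbors differ by exactly $m \le x$ positions; since each type occupies a block of $x$ consecutive positions, any two adjacent occupied nodes lie in the same or consecutive blocks and are thus friendly. The occupied nodes form a connected prefix, so no agent is isolated, and we are done in this case.

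The remaining, genuinely two-dimensional case is $m > x$ (so $\min(m,M) > x$), and here no assignment that lays out the types as contiguous blocks along a single linear traversal can succeed: any such traversal has ``secondary'' grid edges spanning up to $\min(m,M) > x$ positions, which forces a hostile adjacency. Instead I would carve the grid into $\lambda$ connected regions of size exactly $x$, arranged along a serpentine of compact blocks so that the region-adjacency graph is a path---region $\ell$ touches only regions $\ell-1$ and $\ell+1$---and assign $T_\ell$ to region $\ell$; the $e$ leftover cells are parked as a compact empty region attached only to the last type's region. Because consecutive regions carry consecutive types, every cross-region adjacency is friendly, every within-region adjacency is monochromatic, and connectedness of each region rules out isolation.

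The main obstacle is precisely this two-dimensional partition for $m > x$. The difficulty is to make the region-adjacency graph exactly a path for \emph{all} admissible $m, M, x, \lambda$, including when $x$ does not divide the relevant grid dimensions, while simultaneously (i) keeping every region connected and of size exactly $x$, (ii) absorbing the $e \ge 1$ empty cells without disconnecting a region or creating a hostile boundary, and (iii) ensuring no agent is isolated. I expect this to require an explicit serpentine-block layout together with a bookkeeping argument for the remainder cells. The thin/fat dichotomy above is exactly the threshold at which one-dimensional band constructions stop sufficing and a truly planar construction becomes necessary.
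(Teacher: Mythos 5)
Your reduction is where the proof breaks: you reduce the theorem to producing a ``hostile-free'' assignment with no isolated agent (so that every agent has utility $1$), but such assignments do not always exist, so this plan cannot prove the theorem. Concretely, take $G$ to be the $5\times 5$ grid and $\lambda = 12$ types with $x = 2$ agents each, so $n = 24$ and there is exactly $e = 1$ empty node; this is a valid balanced $2$-binary instance, and it lies squarely in your ``fat'' case $m > x$. Deleting a single node from the $5\times 5$ grid leaves a connected graph in which every node still has an occupied neighbor, so in a utility-$1$ assignment the type labeling would have to be $1$-Lipschitz (adjacent occupied nodes at type-distance at most $1$) on a connected occupied subgraph containing both a type-$1$ and a type-$12$ agent; the Lipschitz condition forces any such pair to be at graph distance at least $11$ in the occupied subgraph. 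But the $5\times 5$ grid has diameter $8$, and deleting one vertex increases any pairwise distance by at most $2$ (a blocked geodesic reroutes around the deleted vertex), so the occupied subgraph has diameter at most $10 < 11$ --- a contradiction. Hence no serpentine, region-path, or any other construction can achieve what your fat case asks for; the obstacle you flag as ``the main obstacle'' is genuinely insurmountable, not merely technical. (Your thin case $m \le x$ is correct, and is essentially the paper's \textsc{Tile} procedure; all of the difficulty of the theorem lives in the case you left open.)

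The paper's proof takes a fundamentally different route: it accepts that some agents end up with utility strictly below $1$. Its algorithm fills the grid column-by-column in type order and, crucially, concentrates the $e$ empty nodes in a single contiguous horizontal segment of one row. The equilibrium argument is then a case analysis showing that (i) most agents get utility at least $2/3$; (ii) a jump to any empty node yields utility at most $2/3$, because the jumper there has no left neighbor and at least one of the top/bottom neighbors is at type-distance at least $2$; and (iii) the few exceptional agents with utility $1/2$ or $1/3$ would obtain at most $1/2$ or $1/3$, respectively, by jumping. This ``make the empty nodes unattractive'' idea is exactly the ingredient your proposal lacks, and the counterexample above shows it is not optional: equilibria of $2$-binary games on grids must, in general, contain agents with utility below $1$. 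The paper's own example following the theorem (a $4\times 4$ grid with $7$ types, Figure~1) already illustrates that its constructed equilibrium contains an agent with utility $2/3$.
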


\begin{proof}
Consider a $2$-binary $\lambda$-TS game with $n$ agents played on an $m \times M$ grid ($m$ rows and $M$ columns) such that $m \leq M$.
Let $x=n/\lambda \geq 2$ be the number of agents per type and $e = mM - n$ be the number of empty nodes.

We compute an equilibrium assignment using Algorithm \ref{alg:binary-grid}, which in turn relies on the \textsc{Tile} procedure described in Algorithm \ref{alg:tile}. In particular, Algorithm \ref{alg:tile} takes as input the set of yet unassigned agents, an $r \times M$ subgrid and an integer $k\leq M$, and outputs an assignment of agents to the nodes in the $r \times M$ subgrid so that the leftmost $k$ nodes in the top row are empty, while all other nodes host an agent if the number of unassigned agents is large enough. Algorithm \ref{alg:tile} considers the yet unassigned agents in increasing type according to the ordering $\succ$ and assigns them in these rows one after the other, from row $1$ to row $r$ along the columns, from left to right, skipping the $k$ empty nodes.

\begin{algorithm}[!t]
\caption{Equilibrium construction for a $2$-binary $\lambda$-TS game on an $m\times M$ grid}\label{alg:binary-grid}
%\SetAlgoLined
\DontPrintSemicolon
\tcc{$x$: number of agents per type}
\tcc{$e$: number of empty nodes}
\tcc{The algorithm terminates immediately when all agents have been assigned. It processes groups of rows by calling Algorithm \ref{alg:tile}.}
 \While {$x \leq m$ and $e\geq M$}{
 \textsc{Tile}($x,0$)\;
leave the next row empty\;
update $m := m - x - 1$, $e := e - M$
}
\If {$x >m$}{
\textsc{Tile}($m,0$) 
}
\Else(\tcc*[h]{In this case it holds that $e<M$ and $x\leq m$}){
Define non-negative integers $\alpha \in \mathbb{N}_{>0}$ and $\beta\leq x-1$ such that $m=\alpha x+\beta$

\For {$i=1, \dots, \alpha-1$}{
\textsc{Tile}($x,0$) 
}
\If {$\beta=0$}{
\textsc{Tile}($x,e$) 
}
\ElseIf {$\beta=1$}{
\textsc{Tile}($1,e$)\; 
\textsc{Tile}($x,0$) 
}
\Else{\textsc{Tile}($x,0$) \;
\textsc{Tile}($\beta,e$) 
}
}
\end{algorithm}

\LinesNumberedHidden
\begin{algorithm}[!t]
\caption{\textsc{Tile}($r,k$)}\label{alg:tile}
\DontPrintSemicolon
\tcc{$r$: number of rows, definining an $r \times M$ subgrid}
\tcc{$k$: number of nodes to be left empty}
\For{$i = 1$ to $k$}{
mark node $(1,i)$ as empty}
\For{$j = 1$ to $M$} { 
 \For{$i = 1$ to $r$}{
\If {node $(i,j)$ is unmarked}{place the next agent (if one exists) according to the ordering $\succ$ at node $(i,j)$}
}}
\end{algorithm}

Algorithm \ref{alg:binary-grid} terminates immediately (at any step) when all agents have been assigned.  We now argue that the resulting allocation $\vv$ computed by Algorithm \ref{alg:binary-grid} is an equilibrium.  Observe that all agents assigned at Lines 2 and 6 get utility $1$, as each agent of type $\ell$ has neighbors of types in $\{\ell-1, \ell, \ell+1\}$. Clearly, if the algorithm terminates at these lines (i.e., all agents are placed),
$\vv$ is an equilibrium.

Note that the algorithm cannot terminate at Lines 9--10 since $e<M$, while each agent $i$ of type $\ell$ placed during this step has utility at least $2/3$; indeed, $i$ has at least one neighbor of type $\ell$, at least one neighbor of a type in $\{\ell-1, \ell+1\}$, and at most one neighbor of type at distance at least $2$.

In Line 12, agents placed at the last $x-1$ rows have utility $1$, while an agent $i$ of type $\ell$ on the row with the empty nodes  has utility at least $1/2$ when $e=M-1$, since she has at least one neighbor of type $\ell$ and  at most one neighbor of type at distance at least $2$, and at least $2/3$ otherwise, since she has at least one neighbor of type $\ell$, at least one neighbor of a type in $\{\ell-1, \ell+1\}$, and at most one neighbor of type at distance at least $2$.

In Line 14, agents assigned have utility at least $2/3$ except (perhaps) the first or the last agent on the row that has utility at least $1/3$, since $x\geq 2$. All agents placed in Lines 15 and 17 have utility at least $2/3$ with a similar reasoning as above. In Line 18, an agent $i$ of type $\ell$  has utility at least $1/2$ when $e=M-1$, since she has at least one neighbor of type $\ell$ and at most one neighbor of type at distance at least $2$, and at least $2/3$ otherwise, since she has at least two neighbors of a type in  $\{\ell-1,\ell,\ell+1\}$
and at most one neighbor of type at distance at least $2$. 

Observe that, in any case, an agent gets utility at most $2/3$ by jumping to an empty node, since either the top or the bottom neighbor will have a large type distance and there is no left neighbor. As in almost all cases, agents in $\vv$ have utility at least $2/3$, it remains to argue about the nodes that have utility less than that. The agent in Line 12 with utility $1/2$ obtains utility at most $1/2$ by jumping, the agents in Line 14 with utility at least $1/3$ obtain utility at most $1/3$ by jumping and, finally, the agent in Line 18 with utility $1/2$ obtains utility at most $1/2$ by jumping. We conclude that $\vv$ is an equilibrium and the theorem follows.
\end{proof}

%The algorithm used for the construction of the equilibrium in the proof of Theorem~\ref{thm:2-binary-grid} does not compute an equilibrium for $\lambda$-TS games with tolerance vectors that are lexicographically larger than that in $2$-binary games. 

Note that Algorithm \ref{alg:binary-grid} may fail to return an equilibrium for lexicographically larger tolerance vectors. Indeed, consider a $4\times 4$ grid and $7$ types of two agents each. Algorithm \ref{alg:binary-grid} puts agents of types $1$ to $4$ in each of the first two rows, skips $2$ nodes, puts agents of types $6$ and $7$ in the third row, and places agents of types $5$, $5$, $6$ and $7$ in the last row; see also Figure \ref{fig:tile-example}. Under tolerance
vector $\tol{7} = \{1, 1, 0, 0, 0, 0, 0\}$ the assignment is an equilibrium (by Theorem \ref{thm:2-binary-grid}), while under tolerance vector $ \mathbf{t'_{7}}= \{1, 1, t_2>\frac{1}{2}, 0, 0, 0\}$, the agent of type $4$ in the second row has utility $2/3$, but can obtain utility $\frac{1+2t_2}{3} > 2/3$ by jumping to the rightmost empty node.

\begin{figure}[ht]
\centering
\includegraphics[scale=0.55]{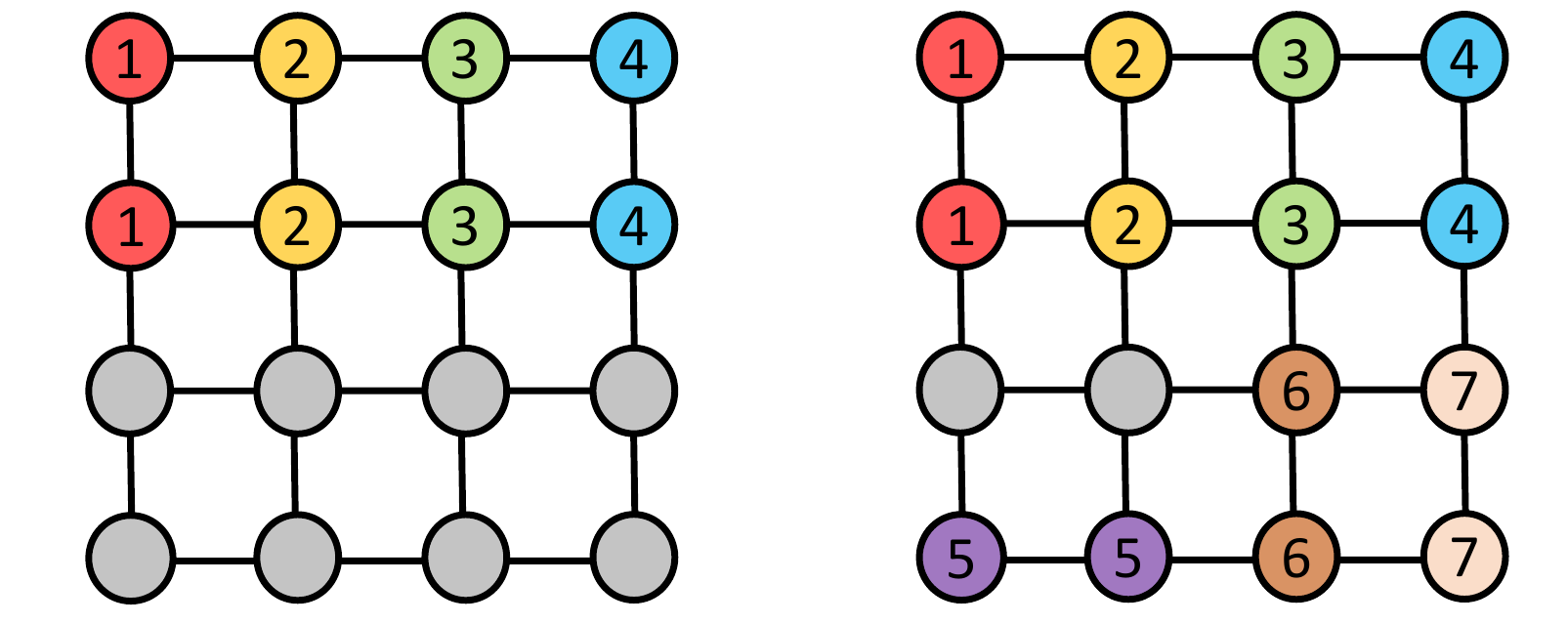}
\caption{An example of how Algorithm \ref{alg:binary-grid} operates. On the left, we see the grid after the for loop in Lines 9--10, while on the right we see the final outcome after Line 12 is executed. Agents of the same color and number are of the same type.}
\label{fig:tile-example}
\end{figure}

So, a different proof is needed for computing equilibria in $\alpha$-binary games with $\alpha \geq 3$. While we have not been able to show this result for every $\alpha$, we do show it for $\alpha\geq \sqrt{\lambda}$. In particular, the equilibrium constructed in the proof of the next theorem guarantees a utility of $1$ to all agents, and thus it is also an equilibrium for games with lexicographically larger tolerance vectors, not necessarily binary ones.

\begin{theorem}\label{thm:grids-3types-rootk}
For $\lambda \geq 3$, every $\sqrt{\lambda}$-binary $\lambda$-TS game $\calI = (N,G,\tol{\lambda})$ in which $G$ is a grid admits at least one equilibrium.
\end{theorem}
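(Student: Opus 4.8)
The plan is to prove the theorem by constructing, for every such game, a single assignment in which \emph{every} agent attains the maximum possible utility $1$; such an assignment is automatically an equilibrium, since no jump can yield utility exceeding $1$, and it remains an equilibrium under any lexicographically larger tolerance vector (which only moves utilities toward $1$). Writing $q=\lceil\sqrt{\lambda}\rceil$ and recalling that there are $x=n/\lambda\ge 2$ agents per type, I first record the reduction: an agent of type $T_\ell$ has utility $1$ exactly when all of its occupied neighbors are of types $T_k$ with $|\ell-k|<\sqrt{\lambda}$, i.e.\ $|\ell-k|\le q-1$, since only then does every relevant tolerance weight equal $1$. Hence it suffices to place the $x$ agents of each of the $\lambda$ types on $G$ so that any two occupied, grid-adjacent cells host types at distance at most $q-1$; empty cells never hurt, as an empty neighbor contributes nothing to a utility.

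The conceptual heart of the construction is a two-dimensional \emph{meta-layout} of the types. I would arrange the $\lambda$ types into a meta-grid of blocks with $q-1$ cells along its short side and $\lceil\lambda/(q-1)\rceil$ along its long side, numbering the blocks (row-major, with rows of length $q-1$) so that two meta-adjacent blocks differ in type by exactly $1$ along the short side and by exactly $q-1$ along the long side. Both differences are at most $q-1$, and this is feasible precisely because $q\ge\sqrt{\lambda}$ forces $q-1<\sqrt{\lambda}$ while still $(q-1)\cdot\lceil\lambda/(q-1)\rceil\ge\lambda$; thus a roughly $\sqrt{\lambda}\times\sqrt{\lambda}$ meta-grid holds all $\lambda$ types with ``bandwidth'' below $\sqrt{\lambda}$. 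This is exactly where the hypothesis $\alpha\ge\sqrt{\lambda}$ is used, and it explains why a different argument is needed for smaller $\alpha$. Realizing each meta-block as a contiguous region of $x$ grid cells, tiled so that physical adjacency of regions matches meta-adjacency, then guarantees that every agent sees only same-type neighbors (distance $0$) or neighbors in an adjacent block (distance at most $q-1$), and so has utility $1$.

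The remaining, and main, difficulty is embedding this meta-layout into an arbitrary topology $G$ of dimensions $m\times M$ with $m\le M$ and $mM>n$, using blocks of \emph{exactly} $x$ cells even though $x$, $m$, and $M$ need not be compatible. I would proceed by a case analysis on the shape of $G$. When $G$ is thin, specifically $m\le(q-1)x$, a plain column-by-column fill in type order already works: vertical neighbors differ in type by at most $1$, while horizontal neighbors, whose fill indices differ by $m$, differ in type by at most $\lceil m/x\rceil\le q-1$. When $m>(q-1)x$ (so $M$ is large as well), there is enough room to lay the meta-grid with its short side along the rows, choosing the block shape (e.g.\ segments of columns) adaptively and pushing the $mM-n$ empty cells to the very end of the type order, so that the only partial block is one of the extreme type and borders only its type-neighbor.

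The step I expect to be the main obstacle is precisely this physical tiling in the non-thin regime: ensuring equal-size blocks of $x$ cells can be fitted for \emph{all} admissible $m\le M$, and that the unavoidable boundary/partial blocks never induce a type jump of $\sqrt{\lambda}$ or more. Borderline aspect ratios (where $m$ barely exceeds $(q-1)x$, or where $x$ is prime so blocks cannot be reshaped into convenient rectangles) are delicate and may require a genuinely two-dimensional snaking placement rather than a clean rectangular tiling. Handling these uniformly, together with the minor subtlety in the definition of $q-1$ when $\lambda$ is a perfect square (where the tolerance radius is $\sqrt{\lambda}-1=q-1$), is where the bulk of the careful case work lies; the utility-$1$ reduction and the meta-layout are the conceptually clean parts.
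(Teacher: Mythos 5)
Your reduction (construct an assignment where every agent has utility $1$, which is then automatically an equilibrium, also for lexicographically larger tolerance vectors) and your ``thin'' case are exactly the paper's construction: fill the grid column by column in type order, so that vertical neighbors differ in type by at most $1$ and horizontal neighbors, whose fill indices differ by the column height, differ in type by strictly less than $\sqrt{\lambda}$. However, your proof is incomplete precisely where you say it is: in the non-thin regime you propose embedding a two-dimensional meta-grid of equal-size type blocks into $G$, and you explicitly leave unresolved whether this tiling can be realized for all aspect ratios, all values of $x$, and all boundary/partial-block configurations. That is a genuine gap --- and, more importantly, the entire difficulty is self-inflicted, because nothing forces you to occupy the whole grid.

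The paper's way out is the observation that $|V| > n$ and empty nodes contribute nothing to any agent's utility, so one may confine all agents to a thin horizontal strip and leave the rest of the topology empty. Concretely, if $m > \sqrt{n}$, then $M \geq m > \sqrt{n}$, so the sub-grid consisting of the first (roughly) $\sqrt{n}$ rows already contains more than $n$ nodes; apply your thin-case column-wise fill inside that strip only. Each column of the strip then contains agents of at most about $\lambda/\sqrt{n}$ consecutive types, so horizontal neighbors have type distance at most $\lambda/\sqrt{n} < \sqrt{\lambda}$ (using $n \geq 2\lambda$), vertical neighbors have type distance at most $1$, and every agent gets utility $1$. The non-thin case thus reduces to the thin case in one line; the meta-grid layout, the adaptive block shapes, the snaking placements, and the analysis of borderline aspect ratios are all unnecessary. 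Your meta-layout idea is not unsound in principle, but as written it is a plan rather than a proof, and the theorem needs none of it.
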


\begin{proof}
Consider a $\sqrt{\lambda}$-binary $\lambda$-TS game $\calI=(N,G,\tol{\lambda})$ in which the topology $G$ is an $m \times M$ grid ($m$ rows and $M$ columns) such that $m \leq M$. Observe that if an agent $i$ is assigned to a node $v_i$ such that {\em all} her neighbors are of types in distance strictly less than $\sqrt{\lambda}$ according to the ordering $\succ$, then agent $i$ has maximum utility $1$, and no incentive to deviate from $v_i$. This is the property we will exploit to construct an equilibrium assignment.
We distinguish between the following two cases:
\begin{itemize}
\item $m \leq \sqrt{n}$. We construct an equilibrium assignment $\vv$ by considering the agents in increasing type according to the ordering $\succ$, and assign them one after the other along the columns, from top to bottom and from left to right. Since there are at most $\sqrt{n}$ rows and $\sqrt{n} = \frac{n}{\lambda} \cdot \frac{\lambda}{\sqrt{n}}$, each column of the grid can fit at most all the agents of $\frac{\lambda}{\sqrt{n}}$ different types. This means that, by the construction of $\vv$, every agent $i$ of type $\ell$ has neighbors of types $T_k$ such that $|\ell - k| \leq \frac{\lambda}{\sqrt{n}} < \sqrt{\lambda}$, where the last inequality follows by the fact that $n \geq 2\lambda$. Hence, agent $i$ has utility $1$ and no incentive to deviate.

\item $m > \sqrt{n}$. Consider the sub-grid containing only the nodes in the first $\sqrt{n}$ rows. Since $M \geq m > \sqrt{n}$, all $n$ agents can be assigned to these nodes, and we thus can repeat the process of the previous case by limiting our attention to the sub-grid. This again guarantees that every agent has utility $1$ and the resulting assignment is an equilibrium. 
\end{itemize} 
This completes the proof.
\end{proof}

Next we turn our attention to games in which the topology is a tree. We show the following result for $\alpha$-binary games when $\lambda \geq 3$.

\begin{theorem}\label{thm:trees-gen}
Every $2$-binary $3$-TS game $\calI=(N,G,\tol{3})$ and every $\alpha$-binary $\lambda$-TS game $\calI=(N,G,\tol{\lambda})$ where $\alpha\geq \left\lfloor{\frac{\lambda}{2}}\right\rfloor$ for $\lambda \geq 4$, in which $G$ is a tree, admit at least one equilibrium. 
\end{theorem}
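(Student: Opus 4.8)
The plan is to give a direct construction of an equilibrium for an arbitrary tree $G$, built around the same observation used in the proof of Theorem~\ref{thm:grids-3types-rootk}: in an $\alpha$-binary game, an agent all of whose \emph{occupied} neighbours have types within distance $\alpha-1$ of its own obtains utility $1$ and never deviates. Since here $\alpha\ge\lfloor\lambda/2\rfloor$ (and $\alpha=2$ in the $\lambda=3$ case), any two types at distance at most $\alpha-1$ are mutually fully tolerant; in particular consecutive types are always compatible, the middle type $T_2$ is universally tolerant when $\lambda=3$, and for even $\lambda$ the halves $\{1,\dots,\lambda/2\}$ and $\{\lambda/2+1,\dots,\lambda\}$ are each cliques of mutually tolerant types. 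The aim is thus to place the agents so that the type varies \emph{slowly} along the tree, keeping as many agents as possible at utility $1$, and then to argue that the remaining agents are trapped.

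A natural first attempt is to fix a root of $G$, take a BFS order of the whole tree, occupy the first $n$ visited nodes (a connected subtree $S$), and assign types in non-decreasing blocks of size $n/\lambda$. Along every root-to-leaf path the type is non-decreasing, and a node almost always lands in the same or an adjacent block as its parent, so within each branch every agent has only compatible neighbours and utility $1$; moreover the empty nodes of $G$ are exactly the BFS-latest ones, so each empty node's only possible occupied neighbour is its parent, which carries a high type. The only edges whose endpoints may lie in far-apart blocks are those incident to high-degree branching vertices, where the children of a single node occupy a long, consecutive range of positions and hence span distant type-blocks.

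The heart of the proof, and its main obstacle, is to certify that every agent failing to reach utility $1$ is \emph{stuck}, i.e. no empty node yields it strictly more utility. A low-type agent stranded next to a high-type child at a branch vertex is easy: every empty node it can reach borders only the high-type frontier of $S$, at type-distance $\ge\alpha$ from it, so that node is worth $0$, while its compatible parent already gives it positive utility. The genuinely delicate case is a \emph{high}-type agent sitting below a low-type branch vertex: since the empty frontier also carries high types, such an agent could in principle migrate there and gain, so the naive monotone layout is \emph{not} yet an equilibrium. Overcoming this is where the construction must be engineered --- the stranded high-type agents have to be confined to positions (leaves, or subtrees whose own frontier is no more compatible than their current neighbourhood) from which no strictly better empty node is reachable, the star being the clean extreme where every empty node hangs off the single low-type centre, so the incompatible leaves are trapped at utility $0$.

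I therefore expect the final argument to proceed by a case analysis on the local shape of the tree at each branching vertex, using $\alpha\ge\lfloor\lambda/2\rfloor$ to guarantee that the ``compatible core'' is large enough (for even $\lambda$, that a whole half of the types is a clique; for $\lambda=3$, that $T_2$ glues everything together) and that any unavoidable incompatibility places an agent at type-distance $\ge\alpha$ from \emph{all} empty nodes it can reach. The parity of $\lambda$ and the exact block sizes will require bookkeeping, but the conceptual difficulty lies entirely in preventing the stranded high-type agents from escaping to the empty frontier; once the placement rules out this single migration, every other agent is either at utility $1$ or provably stuck, and the assignment is an equilibrium.
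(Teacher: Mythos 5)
Your proposal does not reach a proof: you correctly identify the crucial difficulty---that in any ``monotone block'' layout some agents will fail to get utility $1$, and that such agents might profitably jump to an empty node---but you leave exactly this step unresolved. Your BFS construction puts the empty nodes precisely where they are most dangerous: they form the frontier adjacent to the last-placed, highest-type agents, so a stranded high-type agent next to a low-type branching vertex has, by your own admission, a potentially profitable jump. Saying that the construction ``must be engineered'' so that stranded agents are confined, and that you ``expect'' a case analysis on branching vertices to work, is a restatement of the problem, not a solution; no placement rule or invariant is given that would rule out this migration on an arbitrary tree, and it is unclear that the BFS-block skeleton can support one.

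The paper's proof resolves this with a structurally different construction whose whole point is to control what the empty nodes are adjacent to. It roots the tree at a \emph{centroid} and leaves the root empty; it then fills the resulting subtrees one by one, \emph{bottom-up} (so within each subtree the occupied region is connected and lies below the empty region), assigning the low types $T_1,\dots,T_{\lceil\lambda/2\rceil}$ to the first (largest) subtrees, the high types $T_\lambda,\dots,T_{\lceil(\lambda+1)/2\rceil}$ in reverse order to the next ones, and the leftover middle types last. The centroid property guarantees that at most $n/3$ agents, spanning at most $\lceil\lambda/3\rceil+1$ types, end up in the mixed last phase, and this is where $\alpha\geq\lfloor\lambda/2\rfloor$ is invoked; the net effect is that every agent gets utility exactly $1$ or is isolated. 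A final cleanup step then handles isolated agents: if there are at least two, they are all of the same type and can be rearranged inside the last subtree to neighbor one another; if there is exactly one, she is moved to the empty root, after which every empty node of the topology is adjacent only to other empty nodes or to the root, so no agent can strictly gain by jumping. This ``empty nodes offer nothing better'' invariant is exactly what your empty frontier lacks, and supplying it is the missing idea, not mere bookkeeping.
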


\begin{proof}
To construct an equilibrium, we exploit the following known property of trees: Every tree with $x\geq 3$ nodes contains a {\em centroid} node, whose removal splits the tree into at least two subtrees with at most $x/2$ nodes each. We root the tree from such a centroid node, and leave the root empty. 
This leads to a partition of the topology in $k \geq 2$ subtrees, which we order in non-increasing size and denote by $tree_1, \ldots, tree_k$.

To assign the agents we use Algorithm~\ref{alg:tree}, which in turn uses the {\textsc{Bottom-Up}} allocation procedure (described in Algorithm~\ref{alg:treesk}). The procedure {\textsc{Bottom-Up}}($tree, \mathcal{T}_1,\mathcal{T}_{2},\ldots,\mathcal{T}_{s}$) assigns the unassigned agents of types $\mathcal{T}_1,\mathcal{T}_{2}, \ldots,\mathcal{T}_{s}$ to the nodes of the subtree $tree$ from bottom to top (higher to lower depth), so that all the agents of $\mathcal{T}_1$ are covered by either agents of the same type or agents of type $\mathcal{T}_2$, and the assignment for the remaining agents is connected.

\begin{algorithm}[!t]
\caption{Equilibrium construction for a $\left\lfloor{\frac{\lambda}{2}}\right\rfloor$-binary $\lambda$-TS game on a tree (or games with lexicographically larger tolerance vectors)}\label{alg:tree}
\DontPrintSemicolon
%$\mathcal{T}_i$: (remaining) set of agents of type $T_{c+i}$, for some constant $c\geq 0$ and $i=1,\ldots,s$\;
%\tcc{The algorithm terminates immediately when all agents have been assigned or all nodes of $tree$ have been occupied.}

\tcc{$tree_1, \ldots, tree_k$ denote the subtrees of the tree topology in non-increasing order by size, when the topology is rooted at a centroid node.}

Run {\textsc{Bottom-Up}}($tree_1, T_1,T_{2},\ldots,T_{\left\lceil\frac{\lambda}{2}\right\rceil}$).
%the bottom-up-like allocation algorithm with input the first subtree and agents of types $T_1, T_2, \ldots, T_{\left\lceil\frac{\lambda}{2}\right\rceil}$. 
If at least one agent of type $T_1$ remains unassigned, repeat with the next subtree. 
Let $a \leq \left\lceil\frac{\lambda}{2}\right\rceil$ be the smallest type index among unassigned agents, and let $tree_{k_1}$ be the last subtree considered in this step.

Run {\textsc{Bottom-Up}}($tree_{k_1+1}, \mathcal{T}_\lambda,\mathcal{T}_{\lambda-1},\ldots,\mathcal{T}_{\left\lceil\frac{\lambda+1}{2}\right\rceil}$), where $\mathcal{T}_i$ are the unassigned agents of types $T_i$, $i=\lambda,\ldots, \left\lceil\frac{\lambda+1}{2}\right\rceil$.
%Run the bottom-up-like allocation algorithm with input the next subtree and (unassigned) agents of types $T_\lambda,$ $T_{\lambda-1},$ $\ldots,$ $T_{\left\lceil\frac{\lambda+1}{2}\right\rceil}$. 
If at least one agent of type $T_\lambda$ remains unassigned, repeat with the next subtree. 
Let $b \geq \left\lceil\frac{\lambda+1}{2}\right\rceil$ be the largest type index among unassigned agents, and let $tree_{k_2}$ be the last subtree considered in this step.

Run {\textsc{Bottom-Up}}($tree_{k_2+1}, \mathcal{T}_a,\mathcal{T}_{a+1},\ldots,\mathcal{T}_{b}$), where $\mathcal{T}_i$ are the unassigned agents of types $T_i$, $i=a,\ldots, b$.
%Run the bottom-up-like allocation for the next subtree and (unassigned) agents of types $T_a, T_{a+1}, \ldots, T_b$. 
Repeat with the next subtree and the unassigned agents of these types, until all agents have been assigned.

If the last subtree among the ones considered in the previous steps contains at least two isolated agents, then rearrange them within this subtree so that each of them has at least one neighbor. If the last subtree contains a single isolated agent, then move this agent to the root of the tree.
\end{algorithm}

\begin{algorithm}[!t]
\caption{{\textsc{Bottom-Up}}($tree, \mathcal{T}_1,\mathcal{T}_{2},\ldots,\mathcal{T}_{s}$)}\label{alg:treesk}
\DontPrintSemicolon

\tcc{For $i=1,\ldots,s$, $\mathcal{T}_i$ is the set of unassigned agents of a given type}
\tcc{The algorithm terminates immediately when all agents have been assigned or all nodes of $tree$ have been occupied.}
%Bottom-up-like allocation for a given tree and agents of types $\ell_1,\ell_2,\ldots, \ell_s$.}\label{alg:treesk}
%Assign the agents in the following way until all nodes of the tree are occupied or all agents are placed:

Start at the lowest level of $ tree$ and place agents of type $\mathcal{T}_1$ so that an agent of type $\mathcal{T}_1$ is placed at level $h$ only if all nodes at levels at least $h+1$ have been filled. Furthermore, and assuming the previous condition holds, after filling a node at level $h$ we give priority to its sibling nodes. Continue until all agents of type $\mathcal{T}_1$ have been assigned.

Consider the agents of type $\mathcal{T}_2$. Begin by placing an agent of type $\mathcal{T}_2$ to any empty node having a child occupied by an agent of type $\mathcal{T}_1$ and repeat until the parent nodes of all agents of type $\mathcal{T}_1$ are occupied. This is feasible as long as there are at least as many agents of type $\mathcal{T}_2$ as there are agents of type $\mathcal{T}_1$. Continue by placing agents of type $\mathcal{T}_2$ arbitrarily in $tree$ by maintaining a connected assignment. 

Arbitrarily assign the remaining agents in order of input so that the assignment remains connected after assigning each agent.
\end{algorithm}

%Observe that since the number of nodes $x$ in the topology is strictly larger than the number $n$ of agents, some of the subtrees may have more than $n/2$ nodes. %, which is why we need to restrict the types (and number) of agents assigned in each step above.

We first claim that at the end of Step 3 of Algorithm \ref{alg:tree}, every agent either gets utility $1$ or gets utility $0$ if she is isolated. Indeed, it holds that agents of type $T_1$ can only be adjacent to agents of type $T_1$ and $T_2$. Similarly, the agents of type $T_\lambda$ can only be adjacent to agents of type $T_\lambda$ and $T_{\lambda-1}$. In addition, by design, the maximum type distance among all the other agents assigned in Steps 1 and 2 is $\left\lceil\frac{\lambda}{2}\right\rceil-2$. By this discussion, all agents have utility $1$ when $\lambda = 3$ and the game is $2$-binary. Below, we assume that $\lambda \geq 4$. 

To see the claim is true for agents assigned in Step 3, observe that if Step 2 is applied on a subtree of at least $n/3$ nodes, then since we visit subtrees in non-increasing order of their size, Step 1 is also applied on a subtree of at least $n/3$ nodes. Hence, at most $n/3$ agents remain to be allocated. 
Otherwise, if no subtree on which Step 2 is applied has at least $n/3$ nodes, then, again due to the order we visit subtrees, any subtree to which we perform Step 3 has less than $n/3$ nodes. In any case, at most $n/3$ agents will be allocated at Step 3 at any given subtree. These agents belong to at most
$\lceil \lambda/3\rceil+1$ different types and, due to Steps 2 and 3 in Algorithm \ref{alg:treesk}, we are guaranteed that no agent allocated in Step 3 will have a neighbor of type-distance $\lceil \lambda/3 \rceil$.
Since $\lceil \lambda/3\rceil -1 \leq \lfloor \lambda/2\rfloor -1$, such agents either get utility $1$, or $0$ if they are isolated, as required.

It remains to argue that after a possible execution of Step 4, no agent has a profitable deviation.
We distinguish between the following two cases when Step 4 is performed:
\begin{itemize}
\item Case I: There are at least two isolated agents in the last subtree among those considered in the first three steps. First observe that, since the subtrees are considered in non-increasing order by size and the last subtree contains at least two agents, there is no subtree with a single isolated agent. Now, by the definition of the bottom-up-like allocation algorithm, all these agents must be of the last type $T_b$, since if agents of two or more types are assigned in the same subtree, the resulting assignment therein is by construction connected. Therefore, by rearranging the agents of type $T_b$ in the last subtree so that all of them have at least one neighbor, each of them gets utility $1$ and the assignment is an equilibrium.

\item Case II: There is a single isolated agent $i$ in the last subtree of the last type $T_b$ considered, who is moved to the root of the tree. Since Step 4 is performed, all the subtrees that have been considered in the first three steps are full, with the exception of the last subtree which has been left empty after moving agent $i$. Thus, the empty nodes of the topology are only adjacent to other empty nodes or the root.  As a result, an agent of some type $\ell \in [\lambda]$ would be able to get utility $t_{|\ell-b|}$ by jumping to an empty node that is adjacent to the root, and utility $0$ by jumping to any other empty node. However, every agent $j \neq i$ already has  utility at least $t_{|\ell-b|}$. In particular, agent $j$ has utility $1$ if she is not adjacent to the root, utility at least $\frac{1+t_{|\ell-b|}}{2} \geq t_{|\ell-b|}$ if she is adjacent to the root but not isolated before moving $i$ to the root, and utility exactly $t_{|\ell-b|}$ if she is adjacent to the root and was isolated before moving $i$ to the root.
\end{itemize}
This completes the proof.
\end{proof}

For $\lambda = 3$, Theorem~\ref{thm:trees-gen} is tight in the sense that equilibria are not guaranteed to exist when $t_1 < 1$ (Theorem~\ref{thm:inexistence}). For $\lambda \geq 4$, it is not hard to observe that the assignment computed is also an equilibrium in games with lexicographically larger vectors (not necessarily binary ones) than the one stated. 

%%%%%%%%
%%%%%%%%

\section{Quality of equilibria}\label{sec:poa}

In this section, we consider the quality of equilibria measured in terms of social welfare, and bound the price of anarchy and price of stability. Recall that these notions compare the social welfare achieved in the {\em worst} and {\em best} equilibrium to the maximum possible social welfare achieved in any assignment. We start with a general upper bound on the price of anarchy, whose proof follows by bounding the social welfare at equilibrium by the total utility the agents would be able to obtain by jumping to an arbitrary empty node.
Recall that $\tau = \sum_{d=0}^{\lambda-1} t_d$.

\begin{theorem}\label{thm:poa-upper-general}
The price of anarchy of $\lambda$-TS games with tolerance vector $\tol{\lambda}$ is at most
$\frac{\lambda n}{\tau n - \lambda }.$
\end{theorem}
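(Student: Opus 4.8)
The plan is to bound the numerator and denominator of the price-of-anarchy ratio separately. For the numerator, observe that since $t_0 = 1 \ge t_d$ for all $d$, every agent has utility at most $1$ in any assignment, so $\OPT(\calI) \le n$. The whole difficulty therefore lies in the denominator: I would show that every equilibrium $\vv$ satisfies $\SW(\vv) \ge \frac{\tau n}{\lambda} - 1$, since combining this with $\OPT \le n$ yields
$$\PoA \le \frac{n}{\tfrac{\tau n}{\lambda} - 1} = \frac{\lambda n}{\tau n - \lambda},$$
as required (note $\tau \ge t_0 = 1$ and $n \ge 2\lambda$ guarantee the denominator is positive).

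To lower bound the equilibrium welfare, I would fix an equilibrium $\vv$ and select an empty node $v$ that has at least one occupied neighbor; such a node exists because $G$ is connected, has more than $n$ nodes, and hosts $n$ agents. Writing $n_k(v)$ for the number of type-$k$ agents adjacent to $v$ and $a = \sum_k n_k(v) \ge 1$ for the total, the equilibrium condition gives $u_i(\vv) \ge u_i(v, \vv_{-i})$ for every agent $i$; summing over all agents yields $\SW(\vv) \ge \sum_i u_i(v, \vv_{-i})$. It then remains to lower bound this total deviation utility.

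The core of the argument is to estimate $\sum_i u_i(v, \vv_{-i})$. I would first compute an idealized quantity in which every agent — including those already adjacent to $v$ — is treated as if it lands at $v$ with the full neighborhood $(n_k(v))_k$ of size $a$. Since each type contains exactly $n/\lambda$ agents, this idealized sum equals $\frac{n}{\lambda a}\sum_k n_k(v)\sum_\ell t_{|\ell-k|}$. The key monotonicity fact is that $\sum_{\ell=1}^\lambda t_{|\ell-k|} \ge \tau$ for every $k$: matching the distances on the shorter side of $k$ against the tail distances on the longer side and using that the $t_d$ are non-increasing shows this sum is minimized (and equal to $\tau$) at the endpoints $k \in \{1, \lambda\}$. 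Plugging this in and using $\sum_k n_k(v) = a$ gives that the idealized sum is at least $\frac{\tau n}{\lambda}$.

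Finally I would correct for the agents actually adjacent to $v$. When such an agent $i$ of type $\ell$ jumps to $v$, its old node becomes empty, so it sees only $a-1$ neighbors and loses its own $t_0 = 1$ contribution; its true deviation utility is $\frac{N_\ell - 1}{a-1}$ rather than the idealized $\frac{N_\ell}{a}$, where $N_\ell = \sum_k t_{|\ell-k|}n_k(v)$. A short calculation shows the total deficit equals $\frac{a^2 - \sum_\ell n_\ell(v) N_\ell}{a(a-1)}$, which is at most $1$ precisely when $\sum_\ell n_\ell(v) N_\ell \ge a$; and indeed $\sum_\ell n_\ell(v) N_\ell \ge \sum_\ell t_0\, n_\ell(v)^2 = \sum_\ell n_\ell(v)^2 \ge \sum_\ell n_\ell(v) = a$, using that $n_\ell(v)^2 \ge n_\ell(v)$ for non-negative integers. (The degenerate case $a=1$, where the single adjacent agent simply gets utility $0$, is checked directly and gives the same deficit of $1$.) Hence $\sum_i u_i(v,\vv_{-i}) \ge \frac{\tau n}{\lambda} - 1$, completing the bound. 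I expect this adjacency correction to be the main obstacle, since it is the only place where the jump genuinely alters the neighborhood and it is what produces the additive $-\lambda$ in the denominator of the final expression; everything else is a clean averaging argument.
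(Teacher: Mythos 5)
Your proposal is correct and follows essentially the same route as the paper's proof: bound $\OPT$ by $n$, sum the equilibrium conditions for deviations to a single empty node, use the monotonicity fact $\sum_{\ell} t_{|\ell-k|} \geq \tau$, and show the correction for agents adjacent to that node costs at most an additive $1$. Your bookkeeping of that correction (idealized sum minus an explicit deficit, bounded via $\sum_\ell n_\ell(v) N_\ell \geq a$) differs only cosmetically from the paper's (which instead enlarges the denominator $n(v)-1$ to $n(v)$), and your explicit treatment of the $a=1$ case and of the choice of a non-isolated empty node are minor refinements, not a different argument.
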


\begin{proof}
Consider a $\lambda$-TS game $\calI = (N,G,\tol{\lambda})$ with $\EQ(\calI) \neq \varnothing$. Let $\vv$ be an equilibrium, and denote by $v$ an empty node. The utility that an agent of type $T_\ell$, $\ell \in [\lambda]$ would  obtain by unilaterally jumping to $v$ is
\begin{itemize}
\item $\frac{1}{n(v)} \sum_{k \in [\lambda]} t_{|\ell - k|} \cdot n_k(v)$ if she is not adjacent to $v$;
\item $\frac{1}{n(v)-1} \left( \sum_{k \in [\lambda]} t_{|\ell-k|} \cdot n_k(v) - 1 \right)$ otherwise.
\end{itemize}
Also observe that for every type $T_\ell$, $\ell \in [\lambda]$ there are exactly $\frac{n}{\lambda}- n_\ell(v)$ agents that are not adjacent to $v$, and $n_\ell(v)$ agents that are adjacent to $v$. Since $\vv$ is an equilibrium, every agent of type $T_\ell$ is guaranteed  to have at least as much utility as if she were to deviate to $v$, and therefore the social welfare is
\begin{align*}
\SW(\vv)
&\geq
\frac{1}{n(v)} \sum_{\ell \in [\lambda]} \left( \frac{n}{\lambda}-n_\ell(v) \right) \sum_{k \in [\lambda]} t_{|\ell-k|} \cdot n_k(v) 
 + \frac{1}{n(v)-1} \sum_{\ell \in [\lambda]} n_\ell(v) \cdot \left( \sum_{k \in [\lambda]}  t_{|\ell-k|} \cdot n_k(v) - 1 \right) \\
&\geq \frac{1}{n(v)} \sum_{\ell \in [\lambda]} \left( \frac{n}{\lambda}  \sum_{k \in [\lambda]} t_{|\ell-k|} \cdot n_k(v) - n_\ell(v)  \right) \\
&= \frac{1}{n(v)} \sum_{\ell \in [\lambda]} n_\ell(v) \cdot \left( \frac{n}{\lambda} \sum_{k \in [\lambda]} t_{|\ell-k|} - 1 \right) \\
&= \frac{1}{\lambda \cdot n(v)}  \sum_{\ell \in [\lambda]} n_\ell(v) \left( n \sum_{k \in [\lambda]} t_{|k-\ell|} - \lambda \right).
\end{align*}
The second inequality is due to increasing the denominator of the second fraction. The first equality follows by aggregating the factors of $n_\ell(v)$ for every $\ell \in [\lambda]$. Finally, the second equality follows by factorizing $\lambda$. Now observe that because the tolerance vector $\tol{\lambda}$ is non-increasing, we have that $\sum_{k \in [\lambda]} t_{|\ell-k|} \geq \sum_{d=0}^{\lambda-1} t_d = \tau$. Combining this together with the fact that $n(v) = \sum_{\ell \in [\lambda]}n_\ell(v)$, we obtain
\begin{align*}
\SW(\vv) &\geq \frac{ \tau n  - \lambda }{\lambda}.
\end{align*}
The bound on the price of anarchy follows by the fact that the optimal welfare is at most $n$ (the maximum utility of any agent is $1$).
\end{proof}

For each $\ell \in \{1, \ldots, \lambda\}$, let $\tau_\ell = \sum_{k \in [\lambda]} t_{|\ell-k|}$ be the total tolerance of agents of type $\ell$ towards any subset containing one agent of every type. We can show the following general lower bound on the price of anarchy, as a function of these parameters.

\begin{theorem}\label{thm:poa-lower-bound}
The price of anarchy of $\lambda$-TS games with tolerance vector $\tol{\lambda}$ is at least $$\frac{\lambda n}{\frac{\sum_{\ell \in [\lambda]}{\tau_\ell}}{\lambda}n-\frac{\lambda^2-\sum_{\ell \in [\lambda]}{\tau_\ell}}{\lambda-1}}\geq 
\frac{\lambda n}{\frac{2(\lambda-1)\tau}{\lambda}n-\frac{\lambda^2}{\lambda-1}+2\tau}.$$
\end{theorem}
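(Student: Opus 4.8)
The plan is to prove this lower bound by exhibiting, for an arbitrary tolerance vector $\tol{\lambda}$, a single $\lambda$-TS game $\calI$ together with a specific equilibrium $\vv$ whose social welfare is small relative to $\OPT(\calI)$. The first (exact) bound will then follow as the witnessed ratio $\OPT(\calI)/\SW(\vv)$, and the second bound is a purely algebraic relaxation of the first, so the real work is the construction and the welfare computation.

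The guiding idea is a \emph{rainbow} equilibrium in which essentially every agent sees exactly one agent of each of the $\lambda$ types. An agent of type $\ell$ with such a neighborhood has $n(v_i|\vv)=\lambda$ and $n_k(v_i|\vv)=1$ for all $k$, hence utility $\tau_\ell/\lambda$. The only exceptions are $\lambda$ \emph{special} agents, one of each type, sitting in a clique adjacent to the empty region: each special agent of type $\ell$ sees only the other $\lambda-1$ types and so obtains the smaller utility $\frac{\tau_\ell-1}{\lambda-1}$ (note $\frac{\tau_\ell-1}{\lambda-1}\le\frac{\tau_\ell}{\lambda}$ since $\tau_\ell\le\lambda$, which is why these agents form the welfare deficit rather than a profitable deviation). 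Summing the $n-\lambda$ rainbow contributions $\sum_\ell(\tfrac{n}{\lambda}-1)\tfrac{\tau_\ell}{\lambda}$ and the $\lambda$ special contributions $\sum_\ell\tfrac{\tau_\ell-1}{\lambda-1}$, a short calculation collapses the total to
\[
\SW(\vv)=\frac{\sum_{\ell\in[\lambda]}\tau_\ell}{\lambda^2}\,n-\frac{\lambda^2-\sum_{\ell\in[\lambda]}\tau_\ell}{\lambda(\lambda-1)}.
\]
Since $\OPT(\calI)=n$, dividing $n$ by this quantity (and clearing a factor of $\lambda$) yields precisely the first stated bound on $\PoA$.

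To certify that $\vv$ is an equilibrium I only need to rule out jumps to empty nodes. A rainbow agent of type $\ell$ who jumps to the empty node adjacent to the special clique again sees one agent of each type and keeps utility $\tau_\ell/\lambda$, while a special agent who jumps there vacates her own node and again sees the remaining $\lambda-1$ types, keeping $\frac{\tau_\ell-1}{\lambda-1}$; thus every agent is indifferent and no deviation is profitable. The value $\OPT(\calI)=n$ comes from a fully segregated assignment on the same topology, in which every agent neighbors only agents of her own type and therefore has utility $1$. Finally, to descend from the first to the second bound I will use the identity $\sum_{\ell}\tau_\ell=\lambda+2\sum_{d=1}^{\lambda-1}(\lambda-d)t_d\le 2(\lambda-1)\tau$, which (because $\lambda\le 2(\lambda-1)$ and $\lambda-d\le\lambda-1$) simultaneously bounds the $n$-coefficient and the additive constant of the denominator in the direction that only weakens the bound.

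The main obstacle is the topology design, not the accounting. I must realize a single \emph{connected} graph on which (i) almost every occupied node carries a perfect rainbow neighborhood, which forces $\lambda$-regularity in the bulk and sharply limits how the rainbow gadgets can be wired together; (ii) the empty node or nodes tempt no agent, so any connecting edges must route through empty positions or otherwise leave the rainbow/special utilities intact; and (iii) the very same graph still admits a perfectly segregated, utility-$1$ assignment so that $\OPT(\calI)=n$. Reconciling a structure that mixes all $\lambda$ types everywhere with the existence of a monochromatic optimum on one fixed connected topology, while keeping the boundary deficit concentrated on exactly the $\lambda$ special agents, is the delicate part; once such a topology is in hand, the welfare decomposition above pins down the exact constant and the theorem follows.
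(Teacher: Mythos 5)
Your accounting is exactly right and matches the paper's target: the equilibrium in the paper is precisely your ``rainbow'' assignment (each non-special agent of type $\ell$ sees one agent of every type, utility $\tau_\ell/\lambda$; one special agent per type sees only the other $\lambda-1$ types, utility $\frac{\tau_\ell-1}{\lambda-1}$), the welfare collapses to the same expression $\frac{\sum_\ell \tau_\ell}{\lambda^2}n-\frac{\lambda^2-\sum_\ell\tau_\ell}{\lambda(\lambda-1)}$, and your pair-counting identity $\sum_\ell\tau_\ell=\lambda+2\sum_{d=1}^{\lambda-1}(\lambda-d)t_d\leq 2(\lambda-1)\tau$ is a clean (arguably cleaner than the paper's) justification of the second inequality. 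But there is a genuine gap: the topology is never constructed, and you yourself flag its construction as ``the delicate part.'' That construction is the heart of this proof, not a detail, because it must simultaneously certify the equilibrium against \emph{every} empty node (your verification only considers the single empty node next to the special clique, yet any admissible topology has $|V|>n$, so other empty nodes exist and must tempt no one) and realize $\OPT(\calI)=n$.

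Moreover, the framing of your ``main obstacle'' suggests a wrong turn: you try to reconcile the rainbow structure and the monochromatic optimum on the same occupied region (``forces $\lambda$-regularity in the bulk''). The paper's resolution is to \emph{not} reconcile them: the topology has $2n+1$ nodes, split into two disjoint sub-structures hanging off one central node $c$. One sub-structure is $\lambda$ large cliques of size $n/\lambda$, each attached to $c$ by a single edge; these host the segregated optimum (utility $1$ for all, so $\OPT=n$) and are \emph{entirely empty} at equilibrium, which is exactly why they tempt nobody --- jumping into a fully empty clique yields utility $0$. The other sub-structure consists of $2\lambda\mu+1$ small cliques of size $\lambda$, chained so that cross-clique edges join agents of the same type; these host the rainbow equilibrium and are empty in the optimum. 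The $\lambda$ special agents are the ones adjacent to $c$ (in the paper they sit in $\lambda$ \emph{different} small cliques, one per type, rather than in a single clique as you propose, though either wiring gives the same utilities), and $c$ itself is the only nontrivial empty node: jumping there reproduces exactly the jumper's current utility. Without this two-region idea --- bad equilibrium and good optimum living on disjoint parts of one connected graph, with the unoccupied part rendered harmless by being fully empty --- your plan does not go through, and with it, the rest of your write-up completes verbatim.
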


\begin{proof}
Consider a $\lambda$-TS game $\calI = (N,G,\tol{\lambda})$, where $n = |N| = \lambda(2\lambda\mu+1)$ for some integer $\mu\geq 1$.
The topology $G$ contains $2n+1$ nodes as follows. There exists a central node $c$ and $\lambda$ large cliques $K_i$, with $i \in [\lambda]$, where each $K_i$ contains $2\lambda\mu+1$ nodes. There also exist $2\lambda\mu$ small cliques $K_{i,j}$, with $i\in [\lambda]$ and $j\in\{1,\dots,2\mu\}$, each of size $\lambda$, as well as another small clique $K$ of size $\lambda$. Apart from the edges within the cliques, we add the following edges so that any node of a small clique is connected to a single node outside its clique. First, we connect the central node $c$ to a single node of cliques $K_i$ and $K_{i,1}$, for $i \in [\lambda]$. Then, for each $K_{i,1}$, with $i\in[\lambda]$, we connect each node that is not a neighbor of $c$ to a single distinct node in $K_{i,2}$. Similarly, for each $K_{i,j}$, with $i\in\{2, \dots, \lambda\}$ and $j\in\{1, \dots,2\mu-1\}$, we connect any node without a neighbor in $K_{i,j-1}$ to a single distinct node in $K_{i,j+1}$. Finally, for each $K_{i,2\mu}$, with $i\in\{1,\dots,2\mu\}$, we connect the only node with no neighbors in $K_{i,2\mu-1}$ to a single distinct node in $K$; this completes the description of $G$.

In the optimal assignment, for $\ell \in [\lambda]$, all agents of type $\ell$ are at the large clique $K_\ell$ and $\OPT(\calI) = n$. There exists an equilibrium assignment $\vv$ where all agents are placed in the small cliques as follows: each small clique contains a single agent of each type,  the agents neighboring the (empty) central node are all of distinct types, and, finally, for each pair of neighboring agents across different small cliques, both agents are of the same type. It is always possible to obtain such an assignment; see Figure \ref{fig:poa-lower-balanced} for an example with $\lambda = 3$ and $\mu=1$.

\begin{figure}[t]
\centering
\includegraphics[scale=0.55]{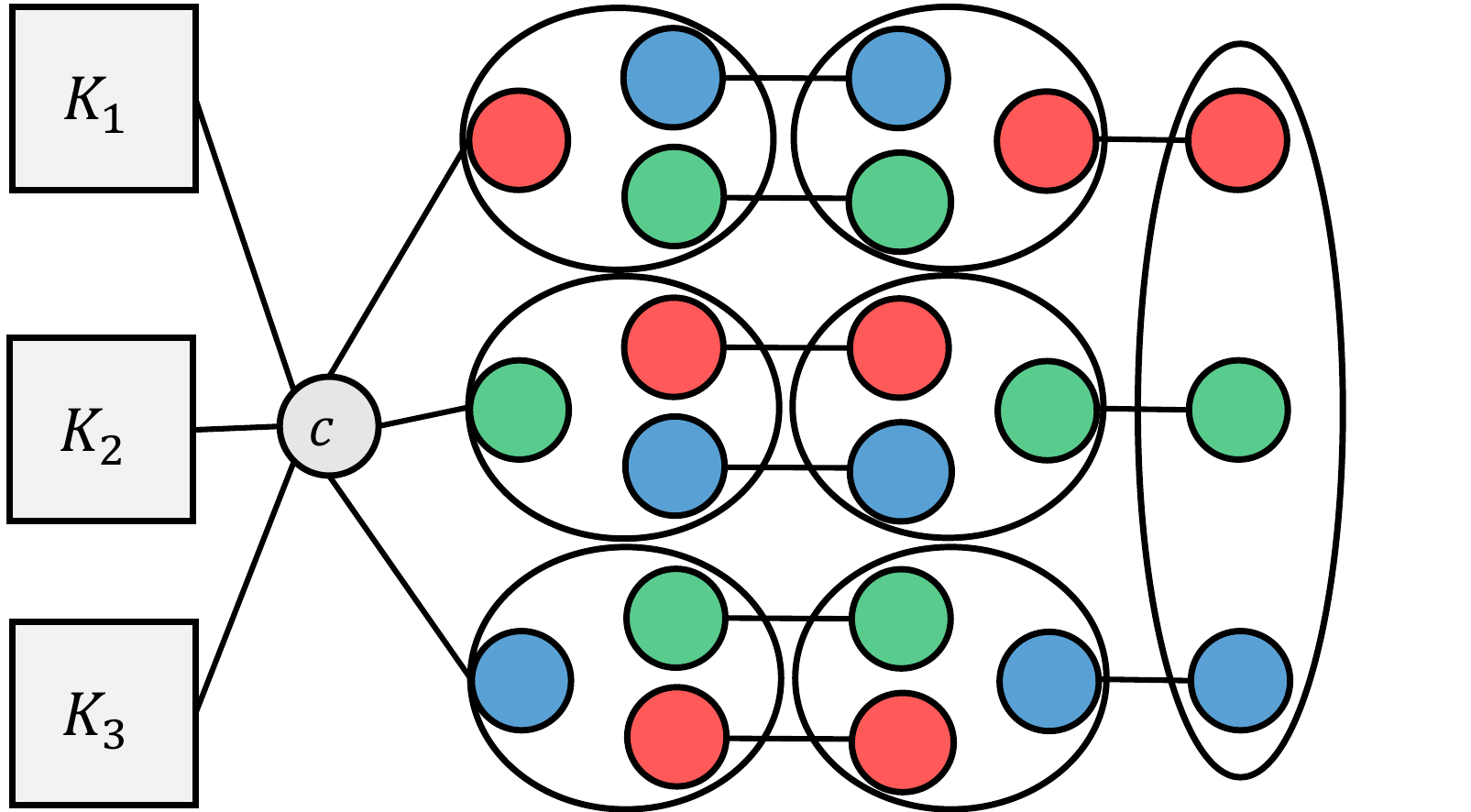}
\caption{An instance used for the proof of Theorem~\ref{thm:poa-lower-bound} for the case of $3$ types. There are $21$ agents, so that each type has $7$ agents. The big squares $K_1$, $K_2$, $K_3$ correspond to cliques of size $7$ (the number of agents per type), while the ovals represent cliques of size $3$ (the number of types); intra-clique edges are omitted.  In an optimal assignment, each large clique contains agents of the same type and each agent gets utility $1$. In a bad equilibrium, each small clique contains a single agent of each type and all gray nodes are left empty. For each type $\ell\in[3]$, all but one agents of type $\ell$ get utility $\tau_\ell/3$, while the last agent gets utility $({\tau_\ell-1})/2$.}
\label{fig:poa-lower-balanced}
\end{figure}

Observe that, for $\ell \in [\lambda]$,  each agent $i$ of type $\ell$ that is not a neighbor of the central node has utility $u_i(\vv) = \frac{\tau_\ell}{\lambda}$ and will obtain the same utility if she jumps to the central node. Similarly, each agent $i$ of type $\ell$ who is a neighbor of $c$ has utility $\frac{\tau_\ell-1}{\lambda-1}$ and will get the same utility by jumping to the central node. Clearly, any agent will get utility $0$ by jumping to another empty node. Hence, $\vv$ is an equilibrium with
\begin{align*}
\SW(\vv) &= \sum_{\ell\in[\lambda]}{\left(2\lambda\mu \frac{\tau_\ell}{\lambda}+\frac{\tau_\ell-1}{\lambda-1}\right)}\\
&= \sum_{\ell\in[\lambda]}{\left(\left(\frac{n}{\lambda}-1\right)\frac{\tau_\ell}{\lambda}+\frac{\tau_\ell-1}{\lambda-1}\right)}\\
&= \frac{\sum_{\ell \in [\lambda]}{\tau_\ell}}{\lambda^2}n+\frac{\sum_{\ell \in [\lambda]}{\tau_\ell-\lambda^2}}{\lambda(\lambda-1)}.
\end{align*}
Since $\OPT(\calI) = n$, we obtain the following lower bound on the price of anarchy:
$$\frac{\lambda n}{\frac{\sum_{\ell \in [\lambda]}{\tau_\ell}}{\lambda}n-\frac{\lambda^2-\sum_{\ell \in [\lambda]}{\tau_\ell}}{\lambda-1}}.$$
Furthermore, since $\tau_\ell < 2\tau$ for every $\ell \in \{2, \dots, \lambda-1\}$ and $\tau = \tau_1 = \tau_\lambda$, it holds that $\sum_{\ell \in [\lambda]}{\tau_\ell} \leq 2(\lambda-1)\tau$. Hence, we have a lower bound of
$$\frac{\lambda n}{\frac{\sum_{\ell \in [\lambda]}{\tau_\ell}}{\lambda}n-\frac{\lambda^2-\sum_{\ell \in [\lambda]}{\tau_\ell}}{\lambda-1}} \geq \frac{\lambda n}{\frac{2(\lambda-1)\tau}{\lambda}n-\frac{\lambda^2}{\lambda-1}+2\tau},$$
as desired.
\end{proof}

From Theorems~\ref{thm:poa-upper-general} and ~\ref{thm:poa-lower-bound} we obtain an asymptotically tight bound for general $\lambda$-TS games.

\begin{cor} 
The price of anarchy in $\lambda$-TS games is $\Theta(\lambda/\tau)$.
\end{cor}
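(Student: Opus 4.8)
The plan is to simply read off the two bounds and check that each is $\Theta(\lambda/\tau)$; there is no new combinatorics here, so the work is entirely asymptotic bookkeeping in $n$. I would start by recording that Theorem~\ref{thm:poa-upper-general} gives $\PoA \leq \frac{\lambda n}{\tau n - \lambda}$, while the second (cleaner) estimate of Theorem~\ref{thm:poa-lower-bound} gives $\PoA \geq \frac{\lambda n}{\frac{2(\lambda-1)\tau}{\lambda}n - \frac{\lambda^2}{\lambda-1} + 2\tau}$. Both are ratios of the form $\frac{\lambda n}{c\,n + O(1)}$ whose leading coefficient $c$ is $\Theta(\tau)$: for the upper bound $c = \tau$, and for the lower bound $c = \frac{2(\lambda-1)}{\lambda}\tau$, where the prefactor $\frac{2(\lambda-1)}{\lambda}$ lies in $[1,2)$ for every $\lambda \geq 2$. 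Throughout I would use $\tau = \sum_{d=0}^{\lambda-1} t_d \geq t_0 = 1$, which keeps every denominator positive.

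For the upper bound, I would observe that $\frac{\lambda n}{\tau n - \lambda}$ is decreasing in $n$ (its derivative in $n$ is $-\lambda^2/(\tau n-\lambda)^2 < 0$), so over the whole class it is maximized at the smallest admissible size $n = 2\lambda$ forced by balancedness ($|T_\ell| = n/\lambda \geq 2$). Evaluating there yields $\frac{2\lambda^2}{\lambda(2\tau-1)} = \frac{2\lambda}{2\tau-1}$, and since $\tau \geq 1$ implies $2\tau - 1 \geq \tau$, this is at most $\frac{2\lambda}{\tau}$. Hence $\PoA = O(\lambda/\tau)$.

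For the lower bound, I would push $\mu \to \infty$ (equivalently $n \to \infty$) in the family of Theorem~\ref{thm:poa-lower-bound}. Because the two additive terms $-\frac{\lambda^2}{\lambda-1}$ and $2\tau$ are constant in $n$, the expression is monotone in $n$ and tends to $\frac{\lambda n}{(2(\lambda-1)\tau/\lambda)\,n} = \frac{\lambda^2}{2(\lambda-1)\tau}$; since $\PoA$ is a supremum over the whole family, it is at least this limiting value regardless of the direction of monotonicity. Writing $\frac{\lambda^2}{2(\lambda-1)\tau} = \frac{\lambda}{\tau}\cdot\frac{\lambda}{2(\lambda-1)}$ and noting $\frac{\lambda}{2(\lambda-1)} \in (\tfrac12, 1]$ for $\lambda \geq 2$ gives $\PoA = \Omega(\lambda/\tau)$. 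Combining the two directions yields $\PoA = \Theta(\lambda/\tau)$. The only point requiring care — not really an obstacle — is confirming that the $\lambda$-dependent prefactors ($\frac{2(\lambda-1)}{\lambda}$ in the denominator and $\frac{\lambda}{2(\lambda-1)}$ in the limit) stay bounded above and below by absolute constants uniformly in $\lambda$, so that the $\Theta$ conceals no hidden dependence on $\lambda$; this is immediate since both quantities are confined to fixed intervals for all $\lambda \geq 2$.
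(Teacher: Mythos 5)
Your proposal is correct and follows essentially the same route as the paper, which obtains the corollary directly by combining Theorem~\ref{thm:poa-upper-general} (giving $O(\lambda/\tau)$, since $\tau \geq 1$ and $n \geq 2\lambda$ make the denominator $\tau n - \lambda = \Theta(\tau n)$) with the weaker form of the lower bound in Theorem~\ref{thm:poa-lower-bound} (giving $\Omega(\lambda/\tau)$ as $n \to \infty$). Your explicit bookkeeping — monotonicity in $n$, evaluation at $n=2\lambda$, and checking that the prefactors $\frac{2(\lambda-1)}{\lambda}$ and $\frac{\lambda}{2(\lambda-1)}$ are bounded by absolute constants — is exactly the verification the paper leaves implicit.
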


Theorem \ref{thm:poa-lower-bound} allows us to provide concrete bounds for subclasses of $\lambda$-TS games. In particular, for $\lambda$-ZTS games, since $\tau_\ell =1$ for every $\ell \in [\lambda]$, we have $\sum_{\ell \in [\lambda]}{\tau_\ell} = \lambda$, and thus the left-hand-side of the inequality in Theorem \ref{thm:poa-lower-bound} yields the following tight bound.

\begin{cor}
The price of anarchy of $\lambda$-ZTS games is $\frac{\lambda n}{n-\lambda}$.
\end{cor}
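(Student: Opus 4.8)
The plan is to obtain this corollary as a direct specialization of the general lower bound in \Cref{thm:poa-lower-bound}, rather than re-deriving an equilibrium construction from scratch. For a $\lambda$-ZTS game we have the zero-tolerance vector $t_0 = 1$ and $t_d = 0$ for all $d \geq 1$. First I would compute the type-tolerance sums $\tau_\ell = \sum_{k \in [\lambda]} t_{|\ell - k|}$ for this vector: since only the distance-$0$ term contributes, each agent of type $\ell$ tolerates only agents of its own type, so $\tau_\ell = t_0 = 1$ for every $\ell \in [\lambda]$. Consequently $\sum_{\ell \in [\lambda]} \tau_\ell = \lambda$ and also $\tau = \sum_{d=0}^{\lambda-1} t_d = 1$.

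Next I would substitute these values into the left-hand (exact) expression of \Cref{thm:poa-lower-bound}. The denominator becomes
\begin{align*}
\frac{\sum_{\ell \in [\lambda]} \tau_\ell}{\lambda} \, n - \frac{\lambda^2 - \sum_{\ell \in [\lambda]} \tau_\ell}{\lambda - 1}
= \frac{\lambda}{\lambda} \, n - \frac{\lambda^2 - \lambda}{\lambda - 1}
= n - \frac{\lambda(\lambda - 1)}{\lambda - 1}
= n - \lambda,
\end{align*}
so the lower bound reads $\frac{\lambda n}{n - \lambda}$, matching the claimed value.

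It then remains to argue that this is not merely a lower bound but exactly the price of anarchy of $\lambda$-ZTS games, i.e.\ that the matching upper bound holds. For this I would invoke \Cref{thm:poa-upper-general}, which gives an upper bound of $\frac{\lambda n}{\tau n - \lambda}$; plugging in $\tau = 1$ yields precisely $\frac{\lambda n}{n - \lambda}$. Since the upper bound from \Cref{thm:poa-upper-general} and the lower bound from \Cref{thm:poa-lower-bound} coincide at this value, the price of anarchy of $\lambda$-ZTS games is exactly $\frac{\lambda n}{n - \lambda}$.

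I do not anticipate a genuine obstacle here, as the statement is purely a corollary obtained by arithmetic substitution. The only point requiring minor care is verifying that the instance family underlying \Cref{thm:poa-lower-bound} remains valid (and in particular realizes a genuine equilibrium) under the zero-tolerance vector, so that the lower bound truly applies to the $\lambda$-ZTS subclass; this is immediate since $\lambda$-ZTS is the special case $t_0 = 1$, $t_d = 0$ of a $\lambda$-TS game and the construction in \Cref{thm:poa-lower-bound} is stated for arbitrary tolerance vectors.
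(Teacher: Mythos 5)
Your proposal is correct and follows essentially the same route as the paper: the paper obtains this corollary by substituting $\tau_\ell = 1$ (hence $\sum_{\ell \in [\lambda]} \tau_\ell = \lambda$) into the exact lower bound of Theorem~\ref{thm:poa-lower-bound}, and notes tightness via the upper bound $\frac{\lambda n}{\tau n - \lambda}$ of Theorem~\ref{thm:poa-upper-general} with $\tau = 1$. Your extra remark that the lower-bound construction remains a valid equilibrium under the zero-tolerance vector is a fine sanity check, but it is already covered since Theorem~\ref{thm:poa-lower-bound} is stated for arbitrary tolerance vectors.
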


We now define the following two natural classes of $\lambda$-TS games in which the tolerance parameters are specific functions of the distance between the types.
\begin{itemize}
\item {\em Proportional} $\lambda$-TS games: $t_d = 1 - \frac{d}{\lambda-1}$  for every  $d \in \{0, \ldots, \lambda-1\}$. We have
$\tau = \sum_{\ell \in [\lambda]} \frac{\ell-1}{\lambda-1} = \frac{\lambda}{2}.$
\item {\em Inversely proportional} $\lambda$-TS games: $t_d = \frac{1}{d+1}$ for every $d \in \{0, \ldots, \lambda-1\}$. We have
$\tau = \sum_{\ell\in [\lambda]}{\frac{1}{\ell}}=H_\lambda,$
where $H_\lambda$ is the $\lambda$-th harmonic number.
\end{itemize}
By Theorems~\ref{thm:poa-upper-general}, ~\ref{thm:poa-lower-bound} and the above definitions, we obtain the following corollaries. 

\begin{cor}
For every $\lambda \geq 2$, 
the price of anarchy of proportional $\lambda$-TS games is 
at most $\frac{2n}{n-2}$ and 
at least $\frac{\lambda n}{(\lambda-1)n-\frac{\lambda}{\lambda-1}}$.
\end{cor}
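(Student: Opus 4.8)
The plan is to derive both bounds directly from Theorems~\ref{thm:poa-upper-general} and~\ref{thm:poa-lower-bound} by substituting the value $\tau = \lambda/2$ that holds for proportional $\lambda$-TS games; no new combinatorial argument is needed, so the work is entirely algebraic simplification.

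For the upper bound, I would start from the general estimate $\PoA \leq \frac{\lambda n}{\tau n - \lambda}$ of Theorem~\ref{thm:poa-upper-general} and plug in $\tau = \lambda/2$. The denominator becomes $\frac{\lambda}{2} n - \lambda = \frac{\lambda(n-2)}{2}$, so the whole expression collapses to $\frac{\lambda n}{\lambda(n-2)/2} = \frac{2n}{n-2}$, which is the claimed upper bound.

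For the lower bound, I would use the right-hand side of the inequality in Theorem~\ref{thm:poa-lower-bound}, namely $\frac{\lambda n}{\frac{2(\lambda-1)\tau}{\lambda} n - \frac{\lambda^2}{\lambda-1} + 2\tau}$, and again substitute $\tau = \lambda/2$. The leading coefficient simplifies as $\frac{2(\lambda-1)\tau}{\lambda} = \lambda - 1$, while the constant terms combine as $-\frac{\lambda^2}{\lambda-1} + 2\tau = -\frac{\lambda^2}{\lambda-1} + \lambda = \frac{-\lambda^2 + \lambda(\lambda-1)}{\lambda-1} = -\frac{\lambda}{\lambda-1}$. Hence the denominator is $(\lambda-1) n - \frac{\lambda}{\lambda-1}$, giving exactly the stated lower bound $\frac{\lambda n}{(\lambda-1)n - \frac{\lambda}{\lambda-1}}$.

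The only thing to watch is the constant-term cancellation in the lower bound, where the $-\frac{\lambda^2}{\lambda-1}$ and $2\tau = \lambda$ terms must be combined over the common denominator $\lambda-1$; this is where a sign or factor slip would be easiest to make. Everything else is a one-line substitution, so I do not expect any genuine obstacle here. (I note in passing that one could instead evaluate $\sum_{\ell \in [\lambda]} \tau_\ell$ exactly and use the sharper left-hand-side bound of Theorem~\ref{thm:poa-lower-bound}, but since the stated estimate already matches the right-hand-side expression, the weaker inequality suffices and keeps the computation short.)
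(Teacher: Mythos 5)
Your proposal is correct and matches the paper's (implicit) argument exactly: the corollary is stated as an immediate consequence of Theorems~\ref{thm:poa-upper-general} and~\ref{thm:poa-lower-bound} by substituting $\tau = \lambda/2$, and your algebra for both the upper bound $\frac{2n}{n-2}$ and the lower bound $\frac{\lambda n}{(\lambda-1)n-\frac{\lambda}{\lambda-1}}$ checks out. Your parenthetical remark is also accurate --- the stated lower bound corresponds to the weaker right-hand-side expression of Theorem~\ref{thm:poa-lower-bound}, so evaluating $\sum_{\ell \in [\lambda]} \tau_\ell$ exactly is unnecessary here.
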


\begin{cor}
For every $\lambda \geq 2$, 
the price of anarchy of inversely proportional $\lambda$-TS games is 
at most $\frac{\lambda n}{H_\lambda n-\lambda}$ and 
at least $\frac{\lambda n}{\frac{2(\lambda-1)}{\lambda}H_\lambda n-\frac{\lambda^2}{\lambda-1}+2 H_\lambda}$.
\end{cor}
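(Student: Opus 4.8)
The plan is to derive both bounds by direct substitution into the two general price-of-anarchy estimates already established, since the inversely proportional class is defined precisely by fixing the tolerance parameters and hence the quantity $\tau$. First I would check that the inversely proportional vector $\tol{\lambda}$ with $t_d = \frac{1}{d+1}$ is a legitimate tolerance vector, so that Theorems~\ref{thm:poa-upper-general} and~\ref{thm:poa-lower-bound} apply: indeed $t_0 = 1$, the sequence $\frac{1}{d+1}$ is strictly decreasing in $d$, and $t_{\lambda-1} = \frac{1}{\lambda} < 1$ for every $\lambda \geq 2$, so the non-increasing and nontriviality requirements of the model hold.

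The one game-specific computation is the value of $\tau$. Reindexing the sum by $\ell = d+1$ gives
$$\tau = \sum_{d=0}^{\lambda-1} t_d = \sum_{d=0}^{\lambda-1} \frac{1}{d+1} = \sum_{\ell=1}^{\lambda} \frac{1}{\ell} = H_\lambda,$$
exactly as recorded in the definition of the class.

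For the upper bound I would plug $\tau = H_\lambda$ into the estimate $\frac{\lambda n}{\tau n - \lambda}$ of Theorem~\ref{thm:poa-upper-general}, which immediately yields $\frac{\lambda n}{H_\lambda n - \lambda}$. For the lower bound I would use the simplified right-hand estimate of Theorem~\ref{thm:poa-lower-bound}, namely $\frac{\lambda n}{\frac{2(\lambda-1)\tau}{\lambda} n - \frac{\lambda^2}{\lambda-1} + 2\tau}$, and again substitute $\tau = H_\lambda$ to obtain $\frac{\lambda n}{\frac{2(\lambda-1)}{\lambda} H_\lambda n - \frac{\lambda^2}{\lambda-1} + 2 H_\lambda}$, which is precisely the claimed lower bound.

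There is essentially no obstacle here: all the analytical work lives in the two general theorems, and the corollary is a mechanical specialization of them. The only points requiring any care are the harmonic-number evaluation of $\tau$ and the verification that the hypotheses of both theorems are met by the inversely proportional vector; because we invoke the weaker (simplified) form of the lower bound of Theorem~\ref{thm:poa-lower-bound}, no finer estimate of $\sum_{\ell \in [\lambda]} \tau_\ell$ is needed.
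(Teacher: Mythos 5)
Your proposal is correct and matches the paper exactly: the corollary is stated without a separate proof precisely because it follows by substituting $\tau = H_\lambda$ into the upper bound of Theorem~\ref{thm:poa-upper-general} and into the simplified right-hand lower bound of Theorem~\ref{thm:poa-lower-bound}. Your additional check that the inversely proportional vector satisfies the model's requirements ($t_0 = 1$, non-increasing, $t_{\lambda-1} < 1$) is a harmless bit of extra care.
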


We conclude our technical contribution with a lower bound on the price of stability for the case of two types of agents. For $2$-ZTS games, the following lower bound improves upon the bound of $34/33$ of \citet{elkind2019jump}, and is also tight when the number of agents tends to infinity because of the upper bound implied by Theorem~\ref{thm:poa-upper-general}; recall that $\tau=1$ for $\lambda$-ZTS games.

\begin{theorem}\label{thm:KK-PoS-jump}
The price of stability of $2$-TS games is at least $2/\tau-\epsilon$, for any $\epsilon>0$.
\end{theorem}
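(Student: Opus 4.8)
The plan is to exhibit a family of $2$-TS games $\{\calI_n\}$ with $\OPT(\calI_n) = n - o(n)$ in which every equilibrium has social welfare at most $\frac{\tau}{2}n + o(n)$, so that $\PoS \geq \frac{n - o(n)}{\frac{\tau}{2}n + o(n)} \to \frac{2}{\tau}$; recall that $\tau = 1 + t_1$ for $\lambda = 2$. This target is forced to be delicate by Theorem~\ref{thm:poa-upper-general}, which already guarantees $\SW(\vv) \geq \frac{\tau n - 2}{2}$ for \emph{every} equilibrium $\vv$: the family must therefore drive even the \emph{best} equilibrium down to essentially this guaranteed minimum, while keeping $\OPT$ close to $n$.

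The construction must resolve a basic tension. If a topology admitted an assignment in which every agent had only same-type neighbors (utility $1$), that assignment would automatically be an equilibrium of welfare $n$, forcing $\PoS = 1$; hence the topology must make \emph{perfect} segregation impossible, yet still support an almost-segregated assignment of welfare $n - o(n)$ to serve as $\OPT$. I would accordingly build $G_n$ around a dense ``core'' on which any placement of the $n$ agents forces each agent to see an asymptotically balanced mixture of the two types (so that every configuration supported on the core has per-agent utility $\approx \tau/2$), while attaching just enough peripheral structure that a near-segregated assignment of welfare $n - o(n)$ exists but is never stable, each of its unavoidable $o(n)$ defect agents having a profitable jump. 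A natural starting point is the clique-and-center instance of Theorem~\ref{thm:poa-lower-bound} specialized to $\lambda = 2$, whose mixed placement already yields an equilibrium of welfare $\frac{\tau}{2}n\,(1+o(1))$; the design work is to remove the two large monochromatic cliques (which is exactly what creates the good, segregated equilibrium there) and replace them by a structure that still permits a high-welfare \emph{static} assignment but admits no high-welfare fixed point.

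With such a family in hand, the argument splits in two. For the lower bound on $\OPT$, I exhibit the explicit near-segregated assignment and verify welfare $n - o(n)$ by directly counting its $o(n)$ mixed agents. The crux is the upper bound $\max_{\vv \in \EQ(\calI_n)} \SW(\vv) \leq \frac{\tau}{2}n + o(n)$. Here I would show that stability forces balanced neighborhoods almost everywhere: a large monochromatic cluster cannot persist at equilibrium, since the core always supplies an empty node whose neighborhood is at least as same-type-rich as that of a suitable boundary agent, giving that agent a deviation; iterating, all but $o(n)$ agents must have same-type fraction $\tfrac{1}{2} + o(1)$, whence $u_i(\vv) = t_1 + (1 - t_1)\bigl(\tfrac{1}{2} + o(1)\bigr) = \tfrac{\tau}{2} + o(1)$ for all but $o(n)$ agents, and summation yields the bound. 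Combining the two parts gives $\PoS \geq 2/\tau - \epsilon$ for $n$ large.

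The main obstacle is exactly this last step: ruling out every high-welfare equilibrium is strictly stronger than the single-bad-equilibrium argument behind the price-of-anarchy lower bound, because near-segregated assignments are precisely the configurations most likely to be stable — pure agents sit at the maximum utility $1$ and never deviate. Making this work hinges on engineering $G_n$ so that perfect segregation is genuinely impossible and, moreover, every near-segregated configuration ``leaks'' through an always-available escape node, leaving only balanced configurations as fixed points, and then on controlling the $o(n)$ error terms uniformly across the family.
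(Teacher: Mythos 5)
There is a genuine gap: your proposal is a research plan that correctly identifies the \emph{shape} any PoS lower bound must have (a family with $\OPT(\calI_n)=n-o(n)$ in which even the best equilibrium has welfare $\approx \tau n/2$), but the entire mathematical content of the theorem lies in the part you defer to ``design work'' and ``engineering.'' You never exhibit a topology, and you never prove the crucial upper bound on the welfare of \emph{every} equilibrium. The paper's proof is almost wholly devoted to exactly this: it builds a specific topology with six parts (a clique $K$, sets $I$, $J$, $X$, $Y$, and a single node $s$) containing exactly \emph{one} empty node, with parameters tuned so that $\gcd(cz/2+c,\,b+1)=1$, $b$ even and not a multiple of $3$; it first analyzes the zero-tolerance game $(N,G,\tol{2}=(1,0))$ and then transfers the conclusion to every $(1,t_1)$ with $t_1<1$ via Theorem~\ref{thm:2-type-equivalence}; and it then proves, through a sequence of structural claims (locating the empty node, parity and gcd arguments forcing monochromatic blocks), that the game admits a \emph{unique} equilibrium up to symmetry, whose welfare is then computed to be $\approx \tau n/2$. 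Asserting that a family with the required properties exists, and that the $o(n)$ errors can be ``controlled uniformly,'' is a restatement of the theorem, not an argument for it.

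Beyond incompleteness, the mechanism you sketch does not match what actually happens, and there is reason to doubt it can be realized. You want stability to force \emph{balanced} neighborhoods almost everywhere, so that each agent has utility $\approx\tau/2$, and you claim ``a large monochromatic cluster cannot persist at equilibrium.'' In the paper's unique equilibrium, however, a large monochromatic cluster \emph{does} persist: all of $I\cup K\cup X$ (half the agents) is red with utility $\approx 1$, while the blue agents in $Y$ (the other half, minus $o(n)$) are almost entirely surrounded by red neighbors and have utility $\approx t_1$. The average $\tau/2$ arises from this $1$-and-$t_1$ mixture, not from uniform mixing. This distinction matters: since $\OPT\approx n$ requires the topology to support near-segregation geometrically, forbidding every stable monochromatic cluster is a much stronger demand than what the paper needs — it instead allows one type to segregate while forcing the other type into exposed positions, and proves no other fixed point exists. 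Finally, note that the single empty node ($|V|=n+1$) is what makes the ``every equilibrium'' analysis tractable at all, since each agent has exactly one possible deviation; your ``dense core plus always-available escape node'' sketch leaves the deviation structure unconstrained, which is where such arguments typically collapse.
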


\begin{proof}
We consider a particular $2$-TS game $(N,G, \tol{2} = (1,0))$ and we will show that it admits a unique (up to symmetry) equilibrium $\vv$. Furthermore, we will show that $\vv$ has no isolated agents. By Theorem \ref{thm:2-type-equivalence}, the $2$-TS game $(N,G,\tol{2}' = (1,t_1<1))$ also admits $\vv$ as its unique equilibrium. It will then suffice to argue about the social welfare in $\vv$. Clearly, if $t_1=1$ then $\tau=2$ and the theorem holds trivially.

Let $b$ be an arbitrarily large even positive integer that is not a multiple of $3$ and set $z = 2b+1$ and $c = bz = 2b^2+b$. In our proof we exploit that $\gcd(cz/2+c, b+1)= \gcd(2b^3+4b^2+3b/2, b+1)=1$\footnote{Recall that $b$ is even and note that $2(2b^3+4b^2+3b/2)-(4b^2+4b-1)(b+1)=1$. The claim follows since the greatest common divisor of two integers equals their
smallest positive linear combination.}.

Let $n = |N| = 2c(z+1)$ and consider the following topology $G$ with $n+1$ nodes; see also Figure \ref{fig:pos-lower}.
The set of nodes comprises six sets $I$, $J$, $K$, $X$, $Y$, and $S$. Set $I$ contains $cz/2$ nodes, set $J$ contains $c$ nodes, while $K$ is a clique of $cz/2$ nodes. Set $X$ is a collection of $z$ subsets $X_1$, $X_2 \dots$, $X_z$, where each subset has $b$ nodes, while set $Y$ is a collection of $z$ subsets $Y_1$, $Y_2, \dots$, $Y_z$ where each subset contains $c$ nodes. Finally, set $S$ contains a single node $s$.

\begin{figure}[ht]
\centering
\includegraphics[scale=0.45]{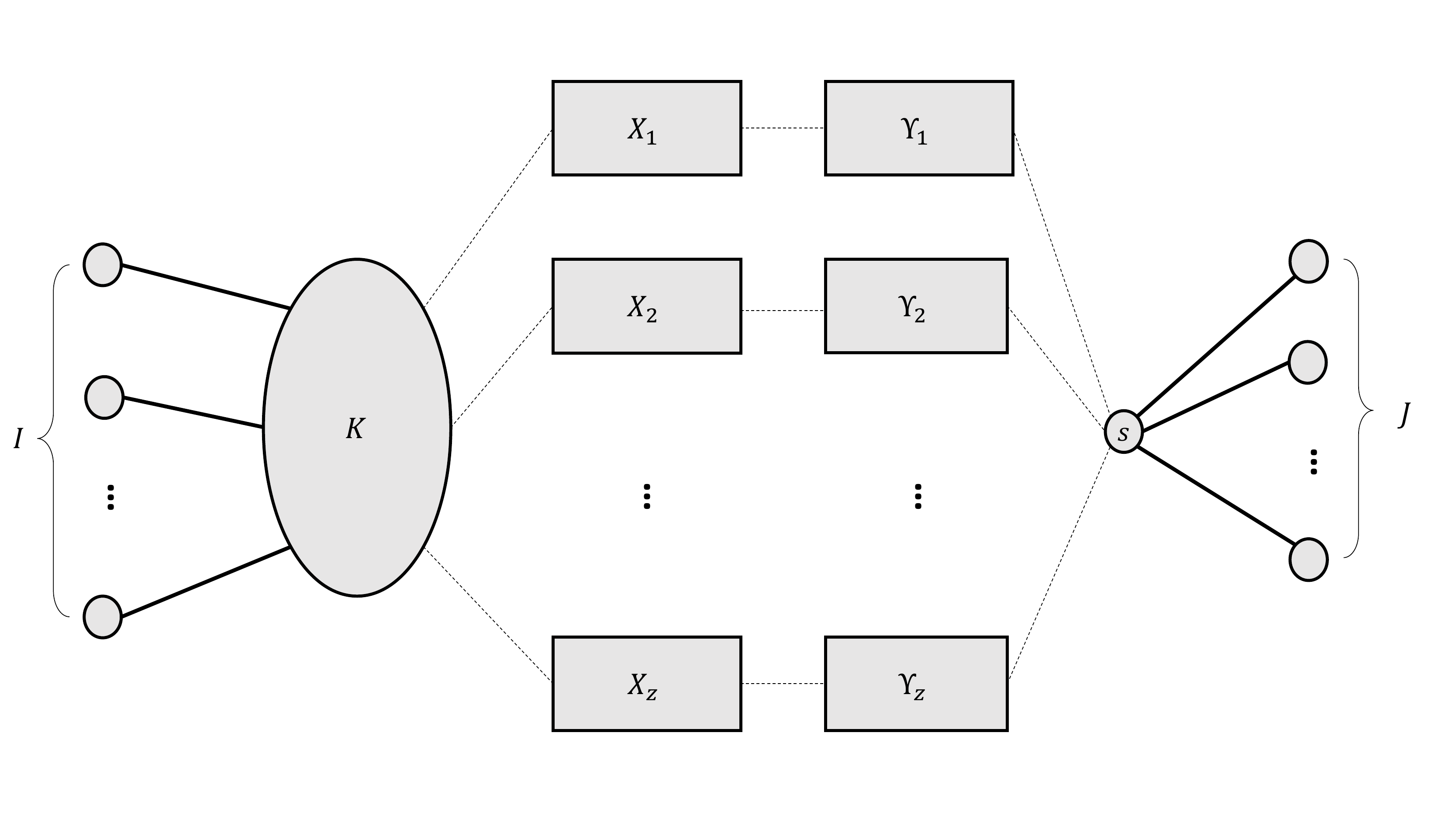}
\caption{The topology $G$ used in the proof. Small circles correspond to nodes, the oval is the clique $K$ while rectangles correspond to sets of nodes. Dashed edges connecting two shapes imply that each node of one shape is connected to all nodes of the other shape.}
\label{fig:pos-lower}
\end{figure}

Apart from the edges in the clique $K$, the following edges also exist. To begin with, each node in $I$ is connected to a unique distinct node in $K$. Each node in $K$ is also connected to all nodes in $X$, while each node in $X_i$ is also connected to all nodes in $Y_i$, for $i \in \{1, \dots, z\}$. Finally, node $s$ is connected to all nodes in $Y \cup J$.

Consider the following assignment $\vv$. All nodes in $I \cup K \cup X$ host red agents, all nodes in $S\cup J$ host blue agents, while all remaining $cz-1$ blue agents are in $Y$. We claim that $\vv$ is an equilibrium. To see why, observe that all agents in $I\cup K \cup S \cup J$ have utility $1$ and, hence, no incentive to jump to the empty node. Each (blue) agent in $Y$ has utility $\frac{1}{b+1}$ and would obtain the same utility by jumping. Finally, each (red) agent in $X$ has utility at least $\frac{z}{z+2}$ and would obtain utility at most $\frac{b}{b+1}$ by jumping; since $z>2b$, $\vv$ is indeed an equilibrium.

Next, we show that $\vv$ is the unique equilibrium assignment. This will be accomplished through a series of claims regarding the structure of any equilibrium assignment and the location of the unique empty node. In the following, let $r_K$ and $b_K$ denote the number of red and blue agents in $K$, $r_{x_j}$ and $b_{x_j}$ the number of red and blue agents in $X_j$ for $j\in\{1,\dots, z\}$, and, similarly, $r_{y_j}$ and $b_{y_j}$ the number of red and blue agents in $Y_j$ where $j \in \{1,\dots,z\}$. Finally, let $r_X$ and $b_X$ be the total number of red and blue agents in $X$.

\begin{claim} \label{cl:no-perfect}
In any equilibrium, there are at least two red and at least two blue agents with utility less than~$1$.
\end{claim}
\begin{proof}Observe that if $K\cup X$ contains at least two red and two blue agents, then the claim trivially holds. So, without loss of generality, let $K\cup X$ contain at most one red agent. Then, $Y$ must contain at least two red agents as the $cz/2+c+1$ nodes in $I\cup S\cup J$ cannot host all $cz+c-1$ remaining red agents. The claim follows by considering these red agents in $Y$ and their blue neighbors in $X$.
\end{proof}

\begin{claim}\label{cl:plokami}
In any equilibrium, the empty node cannot be in $I\cup J$.
\end{claim}
\begin{proof}Assume otherwise. Then, any agent having the same type as the unique neighbor of the empty node would obtain utility $1$ by jumping. Claim \ref{cl:no-perfect} guarantees that at least one such agent, other than the unique neighbor, exists.
\end{proof}

\begin{claim} \label{cl:perfect-plokami}
In any equilibrium, each agent in $J$ has utility~$1$.
\end{claim}
\begin{proof}Assume otherwise and let $i \in J$ be an agent with utility $0$; without loss of generality let us assume that $i$ is red. Since $i$ has no incentive to jump, it must be that all agents neighboring the empty node, except possibly for $i$, are blue. 
If $i$ is not a neighbor of the empty node, by Claim \ref{cl:no-perfect}, there exists at least one blue agent that has incentive to jump to the empty node and obtain utility $1$. So, we assume that $s$ is the empty node. In that case, all agents in $J$ have utility $0$ and it is not hard to see that such an assignment cannot be an equilibrium, since $J$ contains either at least two red agents or at least two blue agents. 
\end{proof}

This implies the following.
\begin{claim}\label{cl:s-cup-j} In any equilibrium, $s$ cannot be empty and all agents in $S\cup J$ are of the same type.
\end{claim}

\begin{claim} \label{cl:perfect-plokami-2}
In any equilibrium, each agent in $I$ has utility~$1$.
\end{claim}
\begin{proof} Assume otherwise and let $i \in I$ be an agent with utility $0$; without loss of generality let us assume that $i$ is red. As in the proof of Claim \ref{cl:perfect-plokami}, since $i$ has no incentive to jump, it must be that all agents neighboring the empty node, except possibly for $i$, are blue. 
If $i$ is not a neighbor of the empty node, by Claim \ref{cl:no-perfect}, there exists at least one blue agent that has incentive to jump to the empty node and obtain utility $1$. 
Otherwise, if the empty node is in $K$, then all agents in $K\cup X$ are blue. Furthermore, any other agent in $I$ except for $i$ is blue, as otherwise the agent would jump to the empty node and improve her utility from $0$ to strictly positive. We conclude that $I\cup K\cup X$ contains $cz-2$ blue agents, while the last two blue agents must be in $Y$, by Claim \ref{cl:s-cup-j}. The claim follows since each of these two blue agents has utility $\frac{b}{b+1}$ and would obtain utility $\frac{cz+c-2}{cz+c-1}$ by jumping; note that $cz+c-2>b$.
\end{proof}

By Claim \ref{cl:s-cup-j}, we assume, without loss of generality,  that all agents in $S\cup J$ are blue. This, together with Claim \ref{cl:perfect-plokami-2}, leads to:
\begin{align}\label{eq:red}
2r_K+r_X+r_Y = c(z+1),
\end{align}
and
\begin{align}\label{eq:blue}
2b_K+b_X+b_Y = cz-1.
\end{align}

The following claim relies on more complex arguments about almost the entire topology.

\begin{claim} In any equilibrium, the empty node cannot be in~$X$.
\end{claim}
\begin{proof}
Assume otherwise that the empty node is in $X_1$, without loss of generality. We begin by considering why agents in $Y\setminus Y_1$ have no incentive to jump. Let $i$ be an agent in $Y_j$, where $j \in \{2, \dots, z\}$ and observe that $r_K+b_K = cz/2$, $r_{x_j}+b_{x_j} = b$, and $r_{y_1}+b_{y_1} = c$.

If agent $i$ is red, her utility is $\frac{r_{x_j}}{b+1}$, as $s$ hosts a blue agent, while by jumping agent $i$ would get utility $\frac{r_K+r_{y_1}}{cz/2+c}$. Hence, since $i$ has no incentive to jump, we obtain $\frac{r_{x_j}}{b+1} \geq \frac{r_K+r_{y_1}}{cz/2+c}$ which gives
$\frac{b_{x_j}+1}{b+1} \leq \frac{b_K+b_{y_1}}{cz/2+c}$. Similarly, if agent $i$ is blue, we obtain $\frac{b_{x_j}+1}{b+1} \geq \frac{b_K+b_{y_1}}{cz/2+c}$ and, therefore, $\frac{r_{x_j}}{b+1} \leq \frac{r_K+r_{y_1}}{cz/2+c}$.

We conclude that if $Y_j$ contains both red and blue agents, it must hold that $\frac{r_{x_j}}{b+1} = \frac{r_K+r_{y_1}}{cz/2+c},$
and $\frac{b_{x_j}+1}{b+1} = \frac{b_K+b_{y_1}}{cz/2+c}$, which, as $\gcd(cz/2+c, b+1)=1$, is only possible when $b_{x_j}=b$, $b_K = cz/2$ and $b_{y_1}= c$. This, however, cannot hold as we already have that all $c+1$ agents in $S\cup J$ are blue and the remaining $cz-1$ blue agents cannot fill $x_j$, $y_1$, $K$ and, due to Claim \ref{cl:perfect-plokami}, $I$.

So, any $Y_j$ contains either only red or only blue agents. In the following, let us assume that $Y\setminus Y_1$ contains $\psi$ sets consisting only of blue agents and $z-1-\psi$ sets consisting only of red agents. Hence, Equation (\ref{eq:red}) becomes
\begin{align}\label{eq:red2}
2r_K+r_X+r_{y_1} = c(2+\psi),
\end{align}
while Equation (\ref{eq:blue}) becomes
\begin{align}\label{eq:blue2}
2b_K+b_X+b_{y_1} = c(z-\psi)-1.
\end{align}

We now argue that $K$ must contain agents of both types. Clearly, since $S\cup J$ hosts $c+1$ blue agents, $K$ (and, by Lemma \ref{cl:perfect-plokami} also $I$) cannot contain only blue agents. If $K$ and $I$ contain only red agents, then there is at least one red agent, let it be agent $i$, in $Y$. The utility of agent $i$ is at most $\frac{b}{b+1}$, as $s$ hosts a blue agent, while by jumping agent $i$ obtains utility at least $\frac{cz/2}{cz/2+c}>\frac{b}{b+1}$, since $z>2b$; hence, $K$ cannot contain only red agents.

Consider a red agent $i$ in $K$; her utility is $\frac{r_K+r_X}{cz/2+c-1}$, while by jumping $i$ would obtain utility $\frac{r_K+r_{y_1}-1}{cz/2+c-1}$. Therefore, we obtain that $r_X\geq r_{y_1}-1$. Similarly, since no blue agent in $K$ wishes to jump, we obtain that $b_X\geq b_{y_1}-1$. Since $r_X+b_X = c-1$ and $r_{y_1}+b_{y_1}=c$, we also have that $r_X\leq r_{y_1}$ and $b_X\leq b_{y_1}$, and, in particular, either $r_X=r_{y_1}$ and $b_X=b_{y_1}-1$ or $r_X=r_{y_1}-1$ and $b_X=b_{y_1}$.

We argue that it must be the case $r_X=r_{y_1}$ and $b_X=b_{y_1}-1$. Indeed, since in $S\cup J \cup \{Y\setminus Y_1\}$ we have already allocated an odd number, i.e., $c\psi+c+1$, of blue agents, and $I\cup K$ must contain an even number of blue agents, an odd number remains to be allocated to $X\cup Y_1$. Equation \ref{eq:red2} becomes
\begin{align}\label{eq:red3}
r_K+r_{y_1} = \frac{c(2+\psi)}{2},
\end{align}
while Equation (\ref{eq:blue2}) becomes
\begin{align}\label{eq:blue3}
b_K+b_{y_1} = \frac{c(z-\psi)}{2}.
\end{align}

Since $b_X=b_{y_1}-1$, we have that $b_{y_1}>0$, while we argue that $r_{y_1}>0$, as otherwise, since $r_X=r_{y_1}$, any red agent in $Y$ has utility $0$ and would prefer to jump; such a red agent exists in $Y$ as, since $S\cup J$ hosts $c+1$ blue agents, it cannot be that all $c(z+1)-1$ agents in $X\cup Y$ are also blue.

Since both $r_{y_1}>0$ and $b_{y_1}>0$, we consider why agents in $Y_1$ do not wish to jump. Since no red agent in $Y_1$ benefits by jumping, we obtain
\begin{align}\label{eq:red_jump}
\frac{r_{x_1}}{b}\geq \frac{r_K+r_{y_1}-1}{cz/2+c-1}
\end{align}
and, since also no blue agent in $Y_1$ wishes to jump,
\begin{align}\label{eq:blue_jump}
\frac{b_{x_1}+1}{b}\geq \frac{b_K+b_{y_1}-1}{cz/2+c-1}.
\end{align}

By combining Equations (\ref{eq:red3})-(\ref{eq:blue_jump}) and, by substituting $c$ and $z$, we obtain
\begin{align}\label{eq:red-final}\nonumber
r_{x_1}&\geq \frac{2(\psi+2)b^3+(\psi+2)b^2-2b}{4b^3+8b^2+3b-2}\\
&= 1+\psi/2 - \frac{(6+3\psi)b^2+(5+3\psi/2)b-\psi-2}{4b^3+8b^2+3b-2},
\end{align}
and
\begin{align}\label{eq:blue-final}\nonumber
b_{x_1}+1&\geq \frac{4b^4+(4-2\psi)b^3+(1-\psi)b^2-2b}{4b^3+8b^2+3b-2}\\
&= b-1-\psi/2 + \frac{(6+3\psi)b^2+(3+3\psi/2)b-\psi-2}{4b^3+8b^2+3b-2}.
\end{align}

Recall that, since the empty node is in $X_1$, it must be $r_{x_1}+b_{x_1} = b-1$. Furthermore, observe that the right-hand-side terms in the two inequalities above are not integers, for any value of $\psi \in \{0, \dots, 2b\}$. The claim follows by observing that, since $r_{x_1}$ and $b_{x_1}$ are integers and $b$ is not a multiple of $3$, it holds that 
\begin{align*}
\lceil \psi/2-\frac{(6+3\psi)b^2+(5+3\psi/2)b-\psi-2}{4b^3+8b^2+3b-2}\rceil +\lceil-\psi/2+\frac{(6+3\psi)b^2+(3+3\psi/2)b-\psi-2}{4b^3+8b^2+3b-2}\rceil =1
\end{align*} for any $\psi \in \{0, \dots, 2b\}$\footnote{Note that this does not hold when $b$ is a multiple of $3$ and $\psi$ is either $2b/3-1$ or $4b/3$.}. Hence, we obtain $r_{x_1}+b_{x_1} = b$; a contradiction.
\end{proof}

By the claims above, we conclude that the empty node is in $Y$; without loss of generality, let the empty node be in $Y_1$. We now show that any equilibrium assignment $\vv'$ is identical (up to symmetry) to $\vv$. Recall that $r_K+b_K = cz/2$, $r_X+b_X = c$, and $r_{x_1}+b_{x_1} = b$.

We first show that all agents in $K$ are red, by exploiting that agents in $K$ have no incentive to jump to the empty node. Let $i\in K$ be a red agent (if one exists in $K$); her utility is $u_i(\vv') = \frac{r_K+r_X}{cz/2+c}$. Since $i$ has no incentive to jump, we obtain
$\frac{r_K+r_X}{cz/2+c} \geq \frac{r_{x_1}}{b+1}$, and, therefore, $\frac{b_K+b_X}{cz/2+c} \leq \frac{b_{x_1}+1}{b+1}$.
Similarly, for any blue agent $j \in K$ (again, assuming one exists) we obtain that $\frac{b_K+b_X}{cz/2+c} \geq \frac{b_{x_1}+1}{b+1}$, and, therefore, $\frac{r_K+r_X}{cz/2+c} \leq \frac{r_{x_1}}{b+1}$.
By these four inequalities above, we conclude that  $\frac{r_K+r_X}{cz/2+c} = \frac{r_{x_1}}{b+1}$ and $\frac{b_K+b_X}{cz/2+c} = \frac{b_{x_1}+1}{b+1}$.
Since $\gcd(cz/2+c, b+1) = 1$, this implies that all agents in $K$ are of the same type. Since each type has $c(z+1)$ agents and all $c+1$ agents in $S\cup J$ are blue, and as by Claim \ref{cl:perfect-plokami} all agents in $I$ are of the same type as those in $K$, we obtain that all $cz/2$ agents in $K$ must be red.

We now show that all agents in $X$ are also red. Assume otherwise and consider a blue agent $j\in X$. Her utility is at most $\frac{2}{z+2}$, while by jumping $j$ would obtain utility at least $\frac{1}{b+1}$ as $s$ is blue. Since $z >2b$, $j$ has an incentive to jump and, therefore, no equilibrium has blue agents in $X$. We have established that all agents in $I\cup K\cup X$ are red and, since there are $c(z+1)$ agents of each type, all remaining agents are blue.

So far, we have completed the argument that $\vv$ is the unique Nash equilibrium, up to symmetry. Under the tolerance vector $\tol{2}' = (1,t_1<1)$, its social welfare is
\begin{align*}
\SW(\vv) & = cz +b(z-1)\frac{cz/2+t_1c}{cz/2+c}+b\frac{cz/2+t_1(c-1)}{cz/2+c-1} +(cz-1)\frac{1+t_1b}{b+1}+c+1\\
&< cz\frac{b(1+t_1)+2}{b+1}+2c+1.
\end{align*}

Consider the following assignment $\vv^*$. All nodes in $I \cup K\cup J$ host red agents, all nodes in $Y \cup S$ host blue agents, while the remaining $c-1$ blue agents are in nodes in $X$. Its social welfare is

\begin{align*}
\SW(\vv^*) &= cz/2(1+\frac{cz/2+t_1(c-1)}{cz/2+c-1})+(c-1)\frac{c+t_1cz/2}{cz/2+c} +cz+\frac{cz}{cz+c}\\
&> cz\frac{z+1+t_1}{z+2}+cz\\
&= cz\frac{2z+3+t_1}{z+2},
\end{align*}
where, in the first equality, the first term is the aggregate utility of agents in $I\cup K$, the second term is the aggregate utility of agents in $X$, the third term is due to agents in $Y$, and the last term is due to agent $s$. Clearly, for the optimal assignment it holds that $\OPT \geq \SW(\vv^*)$.

Since we have selected  $b$ to be arbitrarily large and $z=2b+1$, it holds
\begin{align*}
\frac{\OPT}{\SW(\vv)} &\geq \frac{cz\frac{2z+3+t_1}{z+2}}{cz\frac{b(1+t_1)+2}{b+1}+2c+1}\\
&\geq \frac{2}{1+t_1}-\epsilon,
\end{align*}
for $\epsilon>0$, and the theorem follows.
\end{proof}

\section{Conclusion and future directions}\label{sec:conclusions}
In this paper, we introduced and studied the class of tolerance Schelling games. We made significant progress towards identifying classes of games (with structured topologies such as grids and trees, and binary tolerance vectors)  for which equilibria are guaranteed to exist, and also showed asymptotically tight bounds on the price of anarchy and price of stability. However, there are still many interesting open questions. 

The most important question that our work leaves open is the characterization of games for which equilibria always exist. As this is a quite general and challenging direction, one could start with games that exhibit some sort of structure in terms of the topology or the tolerance vector. For instance, do equilibria always exist when the topology is a grid or a regular graph, for {\em every} tolerance vector? We believe that this is indeed the case, and have done some initial progress towards answering this question in this paper for grids and binary tolerance vectors, but showing a general statement seems elusive at this point. 

The tolerance model we defined in this paper depends on a given ordering of the types and the tolerance parameters are symmetric. While this model captures certain interesting settings, there are multiple ways in which it can be generalized. For example, the tolerance parameters do not need to be symmetric and a different tolerance vector could be defined per type. Taking this further, the tolerance between types does not need to depend on an ordering of the types. Instead, one could define a weighted, directed {\em tolerance graph} that is defined over the different types such that the edge weights indicate the tolerance of a type towards another type; our ordered model can be thought of as the special case with an undirected tolerance line graph. In fact, one could further generalize this idea by considering scenarios in which there are no types of agents at all, but rather the agents are connected to each other via a complete weighted social network, with the different weights indicating tolerance levels.
This is essentially a generalization of the class of social Schelling games proposed by \citet{elkind2019jump}, and is inspired by fractional hedonic games~\citep{aziz2019fractional}.

\bibliographystyle{plainnat}
\bibliography{references}

\end{document}